\newtheorem{theorem}{Theorem}
\newdefinition{remark}[theorem]{Remark}
\newproof{proof}{\textit{Proof}}
\newdefinition{definition}[theorem]{Definition}
\newdefinition{example}[theorem]{Example}
\newclass{\SigmaTwoP}{\boldsymbol{\Sigma}_2^\P}
\newclass{\PiTwoP}{\boldsymbol{\Pi}_2^\P}
\newcommand{\N}{\mathbb{N}}
\newcommand{\set}[1]{\{#1\}}
\newcommand{\parts}[1]{2^{#1}}
\newcommand{\rsA}{\mathcal{A}}
\newcommand{\rsM}{\mathcal{M}}
\newcommand{\react}[3]{(#1,#2,#3)}
\newcommand{\restart}{\triangleleft}
\DeclareMathOperator{\res}{res}
\DeclareMathOperator{\posvar}{pos}
\DeclareMathOperator{\negvar}{neg}
\newcommand{\blank}{\Box}
\newcommand{\RS}{RS\xspace}
\newcommand{\ie}{\emph{i.e.},\xspace}
\newcommand{\etc}{\emph{etc.}\@\xspace}
\begin{document}

\title{Complexity of the dynamics of reaction systems\tnoteref{blabla}
}
\tnotetext[blabla]{This paper collects, in a more extended/generalized version, both results presented at CiE 2014 conference~\cite{Formenti2014a} and the ones presented at DCFS 2014 conference~\cite{Formenti2014b}. It also contains a considerable number of further results.}
\author[unimib]{Alberto Dennunzio}
\ead{dennunzio@disco.unimib.it}

\author[unice]{Enrico Formenti}
\ead{enrico.formenti@unice.fr}

\author[unimib]{Luca Manzoni}
\ead{luca.manzoni@disco.unimib.it}

\author[unimib]{Antonio E. Porreca}
\ead{porreca@disco.unimib.it}

\address[unimib]{Dipartimento di Informatica, Sistemistica e Comunicazione\\
  Università degli Studi di Milano-Bicocca\\
  Viale Sarca 336/14, 20126 Milano, Italy}

\address[unice]{Universit\'e C\^ote d'Azur (UCA), CNRS, I3S, France}

\begin{abstract}
  Reaction systems are discrete dynamical systems inspired by bio-chemical processes, whose dynamical behaviour is expressed by 
  set-theoretic operations on finite sets. Reaction systems thus provide a 
  description of bio-chemical phenomena that complements the more traditional approaches, for instance those based on differential equations.
  A comprehensive list of decision problems about the dynamical behavior of reaction systems (such as 
  cycles and fixed/periodic points, attractors, and reachability) 
  is provided along with the corresponding computational complexity, which ranges from tractable problems to $\PSPACE$-complete problems.
\end{abstract}

\begin{keyword}
 reaction systems \sep computational complexity \sep natural computing \sep discrete dynamical systems
\end{keyword}

\maketitle

\section{Introduction}

Reaction systems (\RS) are a computational model recently introduced by Ehrenfeucht and Rozenberg~\cite{Ehrenfeucht2007a} which is inspired by chemical reactions.
Roughly speaking, a RS is made of a finite set of entities (molecules of substances) and a finite set of rules (reactions). Entities are used as reactants, inhibitors, and products of a reaction. If, among a current set of entities (state), the reactants of a given reaction are available and there are no inhibitors, then that reaction takes place, \ie the products of the considered reaction are generated. The collection of the products of all enabled reactions give the new state of the system. In fact, this framework therefore captures the fundamental mechanisms of facilitation and  inhibition on which biochemical reactions are based.
Further, the functioning of the model undergoes two important principles, namely, the threshold assumption and the non-permanency assumption. The first one means that if a resource is available then there is always a sufficient amount of it for all the reactions which can take place. In other terms, reactions needing the same resource do not come into conflict. This principle essentially defines \RS as a qualitative model (\ie  only the presence or absence of a substance is measured). The latter assumption imposes that an entity will disappear from the current state of the system if no enabled reaction will have produced it. 

Thus, \RS are very different from the standard frameworks for biomodelling such as, for instance, ordinary differential equations and continuous stochastic process. However, it has been recently shown that they are able to capture the main quantitative characteristics of the ODE-based model for biochemical processes. In~\cite{Azimi2014b}, for instance, they show how to build a \RS reproducing the quantitative behavior of the ODE model for the molecular heat shock response. Another important aspect in the \RS framework is that, differently than in the traditional models, one is able to follow the cause-effect relations between reactions, allowing for instance to understand what led a reaction to take place. 

As consequence of these facts and the capability of \RS to capture the main mechanisms of biochemical reactions, the interest and use of \RS to study practical problems concerning biological and/or chemical processes~\cite{corolli2012,Azimi2014b}  has grown since their introduction.

The study of the dynamical properties of the model at hand is a central question in the context of modelling. When considering \RS as a model, this question
might seem trivial at a first glance. Indeed, \RS are finite systems and, as such, their dynamics is ultimately periodic. However, in practical applications
this information is pointless and a more detailed analysis is expected. For example, in the context of genetic modelling, the number of attractors can
be associated with cell differentiation~\cite{Shmulevich2002a}, the period of attractors may represent the period of life process (respiration, 
circadian cycle, \etc), the size of the basin of attraction can be associated with the robustness in complex biological systems~\cite{Kitano2004a}.
The analysis can be pushed even further considering limited resources~\cite{Ehrenfeucht2011a}.   

In~\cite{Ehrenfeucht2010a}, they considered some particular state as a death state (the empty set in this case) and computed the probability of a system reaching the death state.

Other studies focused on the complexity of deciding if any given \RS exhibits a certain dynamical behaviour as, for instance, the appearance of a specific product during the evolution~\cite{Salomaa2013a,Salomaa2013b}. 

This paper pursues the seminal results in \cite{Ehrenfeucht2007a,Salomaa2013a,Salomaa2013b,
Formenti2014c,Dennunzio2015a} about
the dynamical behavior of \RS. 
The purpose is to cover a broad range of questions involving cycles, attractors and reachability, providing in this way a reference list for them. 
Several results are stated in a form
which is as wide as possible (including constraints as ``at least'' and ``at most'') in order to answer the different variants/formulations of a same problem which arise in practical situations. 
Remark that similar problems are studied also in the context of other models used for applications, as, for instance, Boolean Automata Networks and Cellular Automata~\cite{sutner1995a, Dennunzio2016a}.

The first group of results concern the complexity of the decision problems about various forms of reachability under the action of a \RS. We provide the complexity of establishing if  
\begin{itemize}
\item
a state $T$ leads to a state $U$ in at most $k$ steps ($\P$-complete, Theorem~\ref{thm:polynomial-time-reachability})
\item
a state $T$ leads to a state $U$ ($\PSPACE$-complete, Theorem~\ref{thm:reachability})
\item
a state $T$ leads to a cycle of length at most $\ell$ after at most $k$ steps ($\P$-complete, Theorem~\ref{thm:reachability-cycle}) or after at least $k$ steps ($\PSPACE$-complete, Theorem~\ref{thm:cycle-with-long-handle})
\item
there exists a state $T$ leading to a cycle of length at most $\ell$ after at least $k$ steps ($\NP$-complete, Theorem~\ref{thm:reachability-cycle-exists}),
\item
a state $T$ leads to a cycle of length at least $\ell$ after at least $k$ steps ($\PSPACE$-complete, Theorem~\ref{thm:cycle-with-long-handle}) or if such a state exists ($\PSPACE$-complete, Theorem~\ref{thm:reachability-large-cycle-exists}), 
\item a state $T$ is reachable from some other state in $k$ steps ($\NP$-complete, Theorem~\ref{thm:k-ancestor}).
\item a state $T$ is periodic of period at most~$\ell$ and reachable from some other ultimately periodic state having a pre-period of length at least~$k$ ($\NP$-complete, Theorem~\ref{thm:T-in-short-cycle-with-long-handle}) 
\end{itemize}
As to problems involving cycles, we deal with the complexity of determining if\begin{itemize}
\item
a state $T$ belongs to a cycle of length at most $\ell$ ($\P$-complete, Theorem~\ref{thm:state-in-short-cycle})
\item there exists a cycle of length at most~$\ell$ ($\NP$-complete in both the versions with ``for each~$\ell$'' in the statement and with $\ell$ given as input, Theorems~\ref{thm:has-short-fixed-cycle} and~\ref{thm:has-short-cycle})
\item there exists a cycle of length at least~$\ell$ ($\PSPACE$-complete, Theorem~\ref{thm:has-large-fixed-cycle})
\item the given \RS consists of cycles possibly all of length 1 ($\coNP$-complete, Theorems~\ref{thm:bijection} and~\ref{thm:identity})
\end{itemize}
Concerning the attractors, we study the complexity of deciding if 
\begin{itemize}
\item a state $T$ belongs to a local attractor cycle of length at most $\ell$ with attraction basin of diameter at most $d$/at  least $d$ ($\coNP$-complete/$\NP$-complete, Theorems~\ref{thm:small-basin-for-T} and~\ref{thm:large-basin-for-T})
\item there exists a local attractor cycle of length at most $\ell$ with attraction basin of diameter at least $d$/at  most $d$ ($\NP$-complete/$\SigmaTwoP$-complete, Theorems~\ref{thm:large-basin-exists} and~\ref{thm:small-basin-exists})
\item a state $T$ belongs to a global attractor cycle of length at most $\ell$/at least $\ell$  ($\PSPACE$-complete, Theorem~\ref{thm:small-global})
\item there exists a global attractor cycle of length at most $\ell$/at least $\ell$  ($\PSPACE$-complete, Theorem~\ref{thm:small-global})
\end{itemize}
Some proofs of our results exploit the idea that \RS can be used to evaluate Boolean formulae~\cite{Ehrenfeucht2007a} and are based on reductions from problems over these latter. 
Other ones consist of reductions from different halting problems over deterministic Turing machines with bounded tape. Indeed, as described in Section~\ref{sec:turing-machines}, the computations of such machines can be simulated by suitable \RS 
Finally, direct reductions from the problems considered in this paper allow to prove the remaining results. 
\medskip

The paper is structured as follows. Section~\ref{sec:basic} provides the basic notions on \RS and Section~\ref{sec:turing-machines} illustrates the construction of a \RS simulating a Turing machine with bounded tape. The results regarding reachability issues, cycles and attractors appear in the three Sections~\ref{sec:easy-problems},~\ref{sec:hard-problems}, and~\ref{sec:very-hard-problems}, depending on the complexity degree of the involved problems.  
Indeed, this allows a certain uniformity of proof techniques and to factor some parts or ideas.
Section~\ref{sec:conclusion} provides a recap of the results and discusses possible future developments.

\section{Basic notions}
\label{sec:basic}

This section provides all the basic concepts about \RS and those notions about dynamical systems which will be considered in the paper. Notations are taken from~\cite{Ehrenfeucht2007a}. 

Let~$S$ be a finite set and let its elements be called \emph{entities}. Denote by $\parts{S}$ the power set of~$S$.

\begin{definition}
A \emph{reaction}~$a$ over~$S$ is a triple~$\react{R_a}{I_a}{P_a}$, where ${R_a}$, ${I_a}$ and ${P_a}$ are subsets of~$S$ called the set of \emph{reactants}, the set of \emph{inhibitors}, and the of \emph{products}, respectively. The collection of all reactions over~$S$ is denoted by~$\mathrm{rac}(S)$. 
\end{definition}

\begin{definition}
  A \emph{reaction system}~$\rsA$ is a pair~$(S,A)$ where~$S$ is a finite set, called the \emph{background set}, and~$A\subseteq\mathrm{rac}(S)$. Every subset of the background set of~$\rsA$ is said to be a \emph{state}.
\end{definition}

Remark that $R$ and $I$ are allowed to be empty as in the original definition of \RS from~\cite{Ehrenfeucht2007a}. Another option, not used here, is to assume the non emptiness of these two sets, as done later in~\cite{Ehrenfeucht2010a,Ehrenfeucht2011a}, for instance. We want to stress that the essence of our results does not change at all in the non emptiness setting, while our choice allows a more elegant formulation of the results.

\smallskip

For any state~$T \subseteq S$ and any reaction~$a\in\mathrm{rac}(S)$, $a$ is said to be \emph{enabled} in~$T$ if~$R_a \subseteq T$ and~$I_a \cap T = \varnothing$. The \emph{result function}~$\res_a \colon \parts{S} \to \parts{S}$ of~$a$ is defined as
\[\forall T\subseteq S, \quad
\res_a(T) =
\begin{cases}
    P_a & \text{if~$a$ is enabled in~$T$}\\
    \varnothing & \text{otherwise.}
\end{cases}
\]
The definition of~$\res_a$ naturally extends to sets of reactions. Indeed, given a set of reaction $B\subseteq\mathrm{rac}(S)$, define~$\res_B\colon \parts{S} \to \parts{S}$ as $\res_{B}(T)=\bigcup_{a \in B} \res_{a}(T)$ for every $T\subseteq S$. The result function~$\res_{\rsA}$ of a \RS~$\rsA = (S,A)$ is~$\res_{A}$. In this way, the discrete dynamical system with set of states~$\parts{S}$ and next state map~$\res_{\rsA}$ is associated with the \RS~$\rsA = (S,A)$.

The following is an example of a simple \RS which is able to simulate a NAND gate.

\begin{example}[NAND gate]
  To implement a NAND gate using a \RS we use  $S = \set{0_a, 1_a, 0_b, 1_b, 0_{\text{out}}, 1_{\text{out}}}$ as background set. The first four elements represent the two inputs (denoted by the subscripts $a$ and $b$), the last two, on the other hand, denote the two possible outputs. The reactions used to model a NAND gate are the followings: $(\set{0_a,0_b},\varnothing,\set{1_{\text{out}}})$, $(\set{0_a,1_b},\varnothing,\set{1_{\text{out}}})$, $(\set{1_a,0_b},\varnothing,\set{1_{\text{out}}})$, and $(\set{1_a,1_b},\varnothing,\set{0_{\text{out}}})$. Similarly to NAND gates, others gates can be simulated and it is possible to build circuits with gates of limited fan-in using only a number of entities and reactions that is linear in the size of the modelled circuit.
\end{example}

We now proceed by recalling the necessary definitions of the dynamical properties investigated in this work.

Let~$\rsA = (S,A)$ be a \RS. For any given state $T \subseteq S$, the \emph{dynamical evolution} or \emph{dynamics} of~$\rsA$ starting from $T$ is  the sequence $(T, \res_{\rsA}(T), \res_{\rsA}^{2}(T), \ldots)$ of states visited by the system starting from $T$, \ie the sequence in which for every $i \in \N$ the $i$-th element is $\res_{\rsA}^{i}(T)$. We say that  \emph{$T$ leads to a state $U$ in $t$ steps} if $U$ belongs to the dynamics starting from $T$ and $U=\res_{\rsA}^{t}(T)$. In that case, $T$ is said to be a \emph{$t$-ancestor} of $U$. We simply say that $T$ leads to $U$ if $T$ leads to a state $U$ in $t$ steps, for some $t\in\N$. Since $\parts{S}$ is finite, any of the above sequences is ultimately periodic, \ie there exist two integers $h\in \N$  and $k>0$ such that $\res_{\rsA}^{h+k}(T)=\res_{\rsA}^{h}(T)$. The smallest values of $h$ and $k$ for which this condition holds are called \emph{preperiod} and \emph{period} of $T$, respectively. If $h=0$, $T$ is a \emph{periodic point} and the set $\{T, \res_{\rsA}(T), \ldots, \res_{\rsA}^{k-1}(T)\}$ is said to be a \emph{cycle} (of \emph{length} $k$). We say that a periodic point $T$ is a \emph{fixed point} if $k=1$.

 The notion of attractor is a central concept in the study of dynamical systems. Recall that an \emph{invariant set} for $\rsA$ is a set of states $\mathcal{U}$ with $\res_{\rsA}(\mathcal{U}) = \set{\res_{\rsA}(U) : U \in \mathcal{U}} = \mathcal{U}$. Remark that each invariant set of any \RS consists of cycles. A \emph{local attractor} in a \RS is an invariant set $\mathcal{U}$ such that there exists $T \notin \mathcal{U}$ with $\res_{\rsA}(T) \in \mathcal{U}$. Intuitively, a local attractor is a set of states $\mathcal{U}$ from which the dynamics never escapes and such that there exists at least one external state whose dynamics ends up in $\mathcal{U}$. A \emph{global attractor} for an \RS $\rsA$ is an invariant set of states $\mathcal{U}$ such that for all $T \in \parts{S}$ there exists $t \in \N$ such that $\res_{\rsA}^{t}(T) \in \mathcal{U}$. A local/global attractor $\mathcal{U}$ is a \emph{local/global fixed-point attractor} if $\mathcal{U} = \set{T}$ and hence, necessarily, $T$ is a fixed point. Similarly, we call a local/global attractor $\mathcal{U}$ a \emph{local/global attractor cycle} if all the states in $\mathcal{U}$ belong to the same cycle. The \emph{attraction basin} of a local/global attractor $\mathcal{U}$ is the set of all states whose dynamics ends up in $\mathcal{U}$. The maximum value of all the preperiods of such states is called the \emph{diameter} of $\mathcal{U}$.

\section{Simulating Turing machines with bounded tape}
\label{sec:turing-machines}
In this section we 
illustrate the simulation of Turing machines with bounded tape performed by {\RS}s. 
For an introduction, basic results, and notions on Turing machines, we refer the reader to~\cite{Hopcroft1979a}.

\smallskip

Let~$M$ be any single-tape deterministic Turing machine with tape of length $m$, set of states~$Q$, tape alphabet~$\Sigma$, and transition function $\delta \colon Q \times \Sigma \to Q \times \Sigma \times \set{-1,0,+1}$. The computations of~$M$ can be simulated by the reaction system~$\rsM = (S,A)$ defined as follows.

\paragraph{Entities}
The set of entities of~$\rsM$ is 
\begin{align*}
  S = \set{a_i : a \in \Sigma, 1 \le i \le m} \cup
  \set{q_i : q \in Q, 1 \le i \le m+1}
\end{align*} 
that is, $S$ consists of all symbols of the alphabet and all states of $M$, each of them indexed by every possible tape position (with the additional position $m+1$ for the states, too).

The generic configuration where $M$ is in state $q\in Q$, its tape head is located on cell $i$, and its tape consists of the string $x=x_1\cdots x_m$, is encoded as the following state
\begin{align*}
  T = \set{(x_j)_j : 1 \le j \le m} \cup
  \set{q_i}
\end{align*} 
In other terms, $T$ contains each symbol of $x$ indexed by its position on the tape and the entity $q_i$ (state entity) storing both the current state and the head position of $M$ (see Fig.~\ref{fig:encodingconf}).

\begin{figure}
\begin{center}
\includegraphics[page=1]{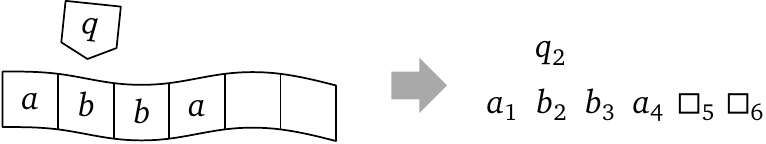}
\end{center}
\caption{Encoding of a configurations of~$M$ as state $T=\set{a_1, b_2, b_3, a_4, \blank_5, \blank_6, q_2}$ of $\rsM$ where each tape symbol is subscripted with its own position, while the current state of $M$ is subscripted with  the position of the head. The square~$\blank$ represents the blank symbol.}
\label{fig:encodingconf}
\end{figure}

\paragraph{Reactions}
We are going to illustrate the elements of the set $A$. Each transition~$\delta(q,a) = (q',a',d)$ of $M$, with $q,q'\in Q$, $a, a'\in\Sigma$, and $d\in\{-1,0,1\}$, gives rise to the following set of reactions:
\begin{align}
  \label{react:transition}
  &(\set{q_i,a_i}, \varnothing, \set{q'_{i+d},a'_i})&
  &\text{for } 1 \le i \le m
\end{align}
If the head of $M$ is located on the cell $i$ of the tape, then the $i$-th reaction simulates the action of $\delta$, by producing the entity encoding both the new state and the new head position of $M$, as well as the symbol written on the tape in position $i$\footnote{Notice that, for simplicity, the background set~$S$ and the reactions of type~\eqref{react:transition} refer to the entities~$q_{m+1}$ for all~$q \in Q$; however, in this section the tape head, by hypothesis, does never reach the~$(m+1)$-th tape cell. The entities~$q_{m+1}$ will play a different role in Section~\ref{sec:very-hard-problems}.}. 


The following reactions assure that the symbols on the tape where the head is not located keep unchanged:
\begin{align}
  \label{react:preserve-tape}
  &(\set{a_i}, \set{q_i : q \in Q}, \set{a_i})&
  &\text{for } 1 \le i \le m
\end{align}
Indeed, for every $i$, the $i$-th reaction of type~\eqref{react:preserve-tape} is inhibited by the presence of any state with the same subscript $i$, to allow that only a reaction of type~\eqref{react:transition} takes place when the head of $M$ is over the tape cell $i$.

\smallskip

Hence, the reaction system~$\rsM$ maps the state encoding any configuration of~$M$ on input~$x$ to the state corresponding to the next configuration of~$M$. Notice that some states of~$\rsM$ are ``malformed'', as they contain zero or more than one state entity, or zero or more than one symbol making reference to a position. In this case, the behaviour of~$\rsM$ is unrelated with the one of~$M$.

We also stress that the reaction system~$\rsM$ can be built from the description of~$M$ and the unary encoding of~$m$ in logarithmic space, since its reactions can be output by iteration over all tape positions~$1 \le i \le m$ in binary notation.

\section{Problems Solvable in Polynomial Time}
\label{sec:easy-problems}

Some interesting problems related to the dynamics of reaction systems involve their forward dynamics and for a limited amount of time (polynomial). These problems are usually efficiently solvable, although often they are among the hardest problems in $\P$. The canonical example is ``local'' reachability.

\begin{theorem}
\label{thm:polynomial-time-reachability}
Given RS~$\rsA = (S,A)$, two states~$T,U \subseteq S$, and a unary integer~$k$, it is~$\P$-complete to decide if~$T$ leads to~$U$ in at most~$k$ steps.
\end{theorem}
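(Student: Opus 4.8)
The plan is to prove membership in $\P$ and $\P$-hardness separately.

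For membership in $\P$, the key observation is that the result function $\res_{\rsA}$ can be evaluated in polynomial time: given a state $V \subseteq S$, for each reaction $a = \react{R_a}{I_a}{P_a} \in A$ we check whether $R_a \subseteq V$ and $I_a \cap V = \varnothing$, and if so add $P_a$ to the output; this takes time polynomial in $|S|$ and $|A|$. Therefore the algorithm simply starts from $T$ and iteratively computes $\res_{\rsA}(T), \res_{\rsA}^2(T), \ldots, \res_{\rsA}^k(T)$, halting and accepting as soon as one of these equals $U$, and rejecting if after $k$ iterations $U$ has not appeared. Since $k$ is given in unary, this performs at most $k$ evaluations of $\res_{\rsA}$, each in polynomial time, so the whole procedure runs in polynomial time. (Note that here one really needs $k$ in unary: with $k$ in binary this would only give a $\PSPACE$ bound, which is consistent with Theorem~\ref{thm:reachability}.)

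For $\P$-hardness, the natural approach is a logspace reduction from the Circuit Value Problem, exploiting the fact — already illustrated in the NAND-gate example — that a \RS can simulate a Boolean circuit of bounded fan-in using a number of entities and reactions linear in the size of the circuit, with each gate computed in one step (or a constant number of steps). Given a circuit $C$ with gates topologically ordered as $g_1, \dots, g_n$ and a fixed input assignment, I would build a \RS on a background set containing entities $0_{g}, 1_{g}$ for each wire/gate $g$, with reactions that, from a state encoding the values of the inputs to a gate, produce the entity encoding that gate's output value. By laying out the gates so that after $t$ steps the values of all gates at ``depth'' $t$ have been computed (and earlier-computed values are carried forward by preservation reactions analogous to~\eqref{react:preserve-tape}), the state reached after $k = n$ steps from the initial state $T$ encoding the input assignment will contain $1_{g_n}$ iff the circuit outputs $1$. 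Taking $U$ to be that final state (or, more robustly, routing everything into a single designated ``accept'' entity and taking $U$ to be a suitably padded canonical accepting state) gives the reduction; all of this is clearly computable in logarithmic space, and $k$ is polynomial hence representable in unary within the reduction's output.

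The main obstacle is the bookkeeping on the hardness side: one must ensure the \RS's dynamics is fully deterministic and well-behaved on the specific trajectory starting from $T$, in particular that partial results computed early are not spuriously destroyed by the non-permanency assumption before they are consumed, and that the final state $U$ is exactly the state reached rather than merely a superset. The cleanest fix is to make every entity that represents a ``known'' value self-sustaining (a reaction $(\set{v}, \varnothing, \set{v})$ for each value entity $v$) so that once a gate's output is deposited it persists, and then after $n$ steps the reached state is a fixed, easily-described set that the reduction can output verbatim as $U$; verifying this is routine but is where the care is needed.
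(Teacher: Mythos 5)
Your membership argument is exactly the paper's: iterate $\res_\rsA$ at most $k$ times (each application of $\res_\rsA$ is computable in polynomial time by testing $R_a \subseteq V$ and $I_a \cap V = \varnothing$ for every reaction), relying on $k$ being given in unary. (Minor point: also compare $\res_\rsA^0(T)=T$ against $U$, since ``at most $k$ steps'' includes $t=0$.) For hardness you depart from the paper, which reduces from the bounded halting problem for deterministic Turing machines via the simulation of Section~\ref{sec:turing-machines}; there the machine is normalised to have a \emph{unique, fixed} accepting configuration (empty tape, head on cell~1), which is precisely why the target state $U$ can be written down by the reduction without knowing the outcome of the computation. A reduction from the Circuit Value Problem is an equally legitimate starting point, and the gate-by-gate simulation you sketch is sound; the two routes buy essentially the same thing.

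However, your primary resolution of the ``what is $U$'' difficulty does not work. If every value entity is made self-sustaining, the state reached after $n$ steps contains $0_g$ or $1_g$ for \emph{every} gate $g$ according to its actual truth value; outputting that state ``verbatim'' as $U$ requires the reduction to evaluate the circuit, which is the very $\P$-complete problem being reduced from and is not something a logarithmic-space reduction can do. The parenthetical alternative --- routing to a single designated accept entity and taking $U$ canonical --- is the correct route, but it is incompatible with blanket self-sustaining reactions: to make the reached state canonical you must let the non-permanency assumption erase all intermediate values, e.g.\ by adding $1_{\mathrm{out}}$ (or the accept entity) as an inhibitor to every gate and preservation reaction, so that one step after the output value appears the system collapses to exactly $U=\set{\text{accept}}$, with $k$ increased accordingly; if the circuit outputs $0$ this state is never reached. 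This mirrors what the paper achieves with the unique empty-tape accepting configuration. With that repair your argument goes through.
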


\begin{proof}
We show this result by reduction from the~$\P$-complete bounded halting problem~\cite{Papadimitriou1993a}
\begin{quote}
  Given a deterministic Turing machine~$M$, a string~$x$ and a unary integer~$k$, does~$M$ accept~$x$ within~$k$ steps?
\end{quote}
Without loss of generality, we assume that Turing machine~$M$ has a unique accepting configuration, with the tape head in the accepting state located on the first tape cell, and with an empty tape (i.e., each tape cell is blank). We can build the reaction system~$\rsM$ simulating~$M$ with~$k+1$ cells of tape (the maximum amount of tape exploitable in~$k$ steps) as described in Section~\ref{sec:turing-machines}. Let~$T$ be the encoding of the initial configuration of~$M$ on input~$x$, and let~$U$ be the encoding of the unique accepting configuration. Then, it holds that state~$T$ leads to~$U$ within~$k$ steps (the property to be decided) if and only if~$M$ accepts~$x$ within~$k$ steps.

Since the mapping~$(M,x,k) \mapsto (\rsM,T,U,k)$ can be carried out in logarithmic space, the problem is~$\P$-hard. It also belongs to~$\P$, since 
the property can be checked by simulating~$k$ steps of the reaction system. \qed
\end{proof}

The problem of detecting the presence of ``short'' cycles is also $\P$-complete.

\begin{theorem}
\label{thm:state-in-short-cycle}
Given a RS~$\rsA = (S,A)$, a state~$T \subseteq S$, and a unary integer~$\ell \ge 1$ as input, it is~$\P$-complete to decide if~$T$ is a periodic point of period at most $\ell$.
\end{theorem}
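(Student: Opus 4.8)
The plan is to show membership in $\P$ and $\P$-hardness separately, mirroring the structure of Theorem~\ref{thm:polynomial-time-reachability}.

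For membership in $\P$, observe that $T$ is a periodic point of period at most $\ell$ if and only if $\res_{\rsA}^{j}(T) = T$ for some $1 \le j \le \ell$. Since the result function $\res_{\rsA}$ can be evaluated in time polynomial in $|S|$ and $|A|$ (for each reaction, check enablement and collect products), we can compute the states $\res_{\rsA}(T), \res_{\rsA}^{2}(T), \ldots, \res_{\rsA}^{\ell}(T)$ one after another in polynomial time (using that $\ell$ is given in unary), testing at each stage whether we have returned to $T$. If $T$ reappears within $\ell$ steps, accept; otherwise reject. This clearly decides the property.

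For $\P$-hardness, the plan is again a logarithmic-space reduction from the bounded halting problem. Given a deterministic Turing machine $M$, an input $x$, and a unary integer $k$, assume without loss of generality that $M$ has a unique accepting configuration, and arrange (by a standard modification of $M$) that this accepting configuration is a \emph{halting} configuration from which the machine does nothing — i.e., the simulating reaction system, once it encodes the accepting configuration, maps that state to itself, making it a fixed point. Build the reaction system $\rsM$ simulating $M$ with $k+1$ tape cells as in Section~\ref{sec:turing-machines}, augmented with reactions that keep the accepting configuration fixed. Now the subtle point: the initial configuration $T$ is typically \emph{not} periodic, so we cannot directly use $T$ as the queried state. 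Instead, after the reduction produces $\rsM$, we compose it with a "reset" gadget: add to the background set a fresh entity and reactions so that, once the accepting configuration is reached, the very next step returns the system to the encoding of the initial configuration of $M$ on $x$. With $\ell := k+2$ (or a suitable unary bound), the queried state $T$ (the initial configuration) is then a periodic point of period at most $\ell$ if and only if $M$ accepts $x$ within $k$ steps; if $M$ does not accept within $k$ steps, the dynamics from $T$ within $\ell$ steps never revisits $T$.

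The main obstacle I expect is the gadget design for the hardness direction: unlike the reachability problem, where the queried target $U$ is the accepting configuration itself, here we need the queried state to lie on a short cycle \emph{exactly when} the machine accepts, which forces us to close the computation back to its start. Care is needed to ensure the reset reactions do not interfere with the faithful simulation during the first $k$ steps — e.g., they should be inhibited by (or only enabled in the presence of) the accepting state entity — and to ensure that in the rejecting case no spurious short cycle through $T$ is created, which should follow from the determinism of $M$ and the injectivity of the configuration encoding on well-formed states. The logarithmic-space bound on the reduction is immediate since, as noted in Section~\ref{sec:turing-machines}, $\rsM$ is constructible in logarithmic space and the extra reactions are similarly simple to enumerate.
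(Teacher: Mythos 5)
Your membership argument (iterate $\res_\rsA$ up to $\ell$ times and test whether $T$ reappears) and the overall shape of your hardness reduction (bounded halting, simulate $M$ on $k+1$ tape cells as in Section~\ref{sec:turing-machines}, and close the computation into a cycle by a reset reaction that maps the accepting configuration back to the encoding of the initial one) are essentially the paper's. The genuine gap is in the final equivalence of the hardness direction, and it comes from your choice of queried state. You take $T$ to be the encoding of the \emph{initial} configuration and assert that if $M$ does not accept within $k$ steps then the orbit of $T$ never returns to $T$, claiming this ``should follow from the determinism of $M$ and the injectivity of the configuration encoding''. It does not: a deterministic $M$ can return to its initial configuration without ever accepting (restore the tape to $x$, bring the head back to cell $1$, re-enter the start state), and then your reaction system has a cycle of length $2$ or so through $T$, so the reduction answers ``yes'' on a ``no'' instance. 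Determinism and injectivity of the encoding only guarantee that the orbit of $T$ mirrors $M$'s configuration sequence; they say nothing about that sequence being aperiodic.

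Two easy repairs exist. (i) Query the encoding of the \emph{accepting} configuration instead, as the paper does: that state recurs if and only if the simulation genuinely reaches acceptance, because the only reaction enabled there is the reset (the simulation reactions being inhibited by the accepting-state entity $r_1$), and the only way to re-create it is via the faithful simulation started from the initial configuration. (ii) Keep $T$ as the initial configuration but first normalize $M$ in logarithmic space so that its start state is never re-entered (rename it and add one transition into the old computation); then the only route back to $T$ is through the reset reaction. Two smaller points: once you add the reset, you must \emph{not} also keep the reactions making the accepting configuration a fixed point, or the successor of that state becomes the union of two configuration encodings, a malformed state; and your bound $\ell = k+2$ is off by one (a machine accepting at exactly step $k+1$ yields period $k+2 \le \ell$), so it should be $\ell = k+1$ — you flagged the latter as adjustable, so it is only bookkeeping.
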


\begin{proof}
We proceed by reduction from the bounded halting problem. Let~$M$ be a Turing machine having a unique accepting configuration in which the accepting state is~$r$, the head is over the tape cell 1, and the tape is empty. Consider the reaction system simulating~$M$ from Section~\ref{sec:turing-machines} and change it so that~$r_1$ is added to the inhibitors of all reactions of types~\eqref{react:transition} and~\eqref{react:preserve-tape}:
\begin{align*}
  &(\set{q_i,a_i}, \set{r_1}, \set{q'_{i+d},a'_i})&
  &\text{for } 1 \le i \le m \\
  &(\set{a_i}, \set{q_i : q \in Q} \cup \set{r_1}, \set{a_i})&
  &\text{for } 1 \le i \le m
\end{align*}
This means that the original reactions are now disabled if~$M$ reaches its accepting state. 
The reaction system~$\rsM$ we are building is also equipped of an additional reaction, which will be instead enabled if~$M$ accepts. Let~$U$ be the state of~$\rsM$ encoding the initial configuration of~$M$ on a given input string~$x$. Such a reaction is:
\[
  (\set{r_1}, \varnothing, U)
\]
Then, the state~$T$, encoding the final configuration of~$M$, belongs to a cycle of length at most~$\ell$ if and only if~$M$ accepts~$x$ within~$\ell-1$ steps.

Since the mapping~$(M,x,k=\ell-1) \mapsto (\rsM,T,\ell)$ can be computed in logarithmic space, the problem under consideration is~$\P$-hard due to the complexity of the bounded halting problem.
The $\P$~upper bound also holds. Indeed, establishing whether any state $T$ is periodic of period at most $\ell$ can be checked 
by performing a simulation of the reaction system with initial state  $T$ and testing whether the state $T$ itself is reached within~$\ell$ steps. \qed
\end{proof}

Even when we search for short cycles with a short preperiod we obtain $\P$-completeness.

\begin{theorem}
\label{thm:reachability-cycle}
Given a RS~$\rsA = (S,A)$, a state~$T \subseteq S$, and two unary integers~$\ell \ge 1$ and~$k$ as input, it is~$\P$-hard to decide if~$T$ has preperiod at most $k$ and period at most~$\ell$. \end{theorem}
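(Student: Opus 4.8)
The plan is to reduce from the $\P$-complete bounded halting problem, in the same spirit as Theorems~\ref{thm:polynomial-time-reachability} and~\ref{thm:state-in-short-cycle}. In fact, the statement here follows almost immediately from the construction used for Theorem~\ref{thm:state-in-short-cycle}, so the first thing I would do is observe that the property ``$T$ has preperiod at most $k$ and period at most $\ell$'' generalises the property ``$T$ is a periodic point of period at most $\ell$'' (which is the case $k=0$). Concretely: given a Turing machine $M$, input $x$, and unary bound $j$, build the reaction system $\rsM$ exactly as in the proof of Theorem~\ref{thm:state-in-short-cycle} — simulate $M$ with $j+1$ tape cells, add $r_1$ to the inhibitors of all reactions of types~\eqref{react:transition} and~\eqref{react:preserve-tape}, and add the reaction $(\set{r_1},\varnothing,U)$, where $U$ encodes the initial configuration of $M$ on $x$. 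Take $T$ to be the (unique) accepting configuration of $M$, set $k=0$, and set $\ell = j+1$.

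Next I would check correctness of this reduction for the new property. If $M$ accepts $x$ within $j$ steps, then starting from $U$ the dynamics of $\rsM$ reaches $T$ after at most $j$ applications of $\res_{\rsM}$, and then the reaction $(\set{r_1},\varnothing,U)$ fires (all other reactions being inhibited by $r_1 \in T$), sending $T$ back to $U$; hence $T$ lies on a cycle of length at most $j+1 = \ell$, so $T$ is a periodic point (preperiod $0 \le k$) of period at most $\ell$. Conversely, if $M$ does not accept $x$ within $j$ steps, then the tape head never reaches the accepting state $r$ in a simulation that stays within $j+1$ cells, so $r_1$ never appears, the reaction $(\set{r_1},\varnothing,U)$ never fires on the orbit of $T$, and the orbit of $T$ just continues simulating $M$; since $M$ has a \emph{unique} accepting configuration which is never revisited after $j$ steps (and $T$ is that configuration), $\res_{\rsM}^{t}(T) \ne T$ for all $t \ge 1$, so $T$ is not periodic of period at most $\ell$ — in particular it does not have preperiod $0$ and period at most $\ell$, so the target property fails. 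One should double-check the degenerate case where $M$ accepts $x$ in exactly $0$ steps (i.e. the initial configuration is already accepting), but by the standing assumption that the accepting configuration is distinguished this can be excluded or handled separately.

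Then I would note that the mapping $(M,x,j) \mapsto (\rsM, T, \ell = j+1, k=0)$ is computable in logarithmic space, exactly as argued in Section~\ref{sec:turing-machines} and in the previous two proofs, which gives $\P$-hardness. Since the statement only claims $\P$-hardness (membership in $\P$ being trivial anyway, by simulating at most $k+\ell$ steps and checking), this completes the argument.

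I do not expect a serious obstacle here: the whole point is that this theorem is essentially a corollary of the construction for Theorem~\ref{thm:state-in-short-cycle}. The one delicate point — and the only place worth writing carefully — is the ``only if'' direction, namely arguing that when $M$ does \emph{not} accept in time, the orbit of $T$ genuinely never returns to $T$ within $\ell$ steps; this relies on the uniqueness of the accepting configuration and on the fact that $T$ (the accepting configuration itself, from which $\rsM$ would continue the simulation) does not re-enter itself, so it should be stated explicitly rather than left to the reader. Everything else is a routine adaptation.
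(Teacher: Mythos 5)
Your proposal is correct and follows essentially the same route as the paper: both observe that the property with $k=0$ is exactly the problem of Theorem~\ref{thm:state-in-short-cycle}, so $\P$-hardness follows by a trivial reduction from that theorem (the paper states this in one line, while you re-expand the underlying bounded-halting construction). The only nitpick is your claim that $\res_{\rsM}^{t}(T)\ne T$ for \emph{all} $t\ge 1$ when $M$ does not accept within $j$ steps --- $M$ could still accept later while staying within $j+1$ cells --- but this is harmless since only $1\le t\le\ell$ matters for the reduction.
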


\begin{proof}
This problem is~$\P$-hard by trivial reduction, since it is a generalisation of the problem of Theorem~\ref{thm:state-in-short-cycle}, the latter having the fixed value~$k=0$ for the preperiod. A polynomial time algorithm for this problem computes a sequence of~$k+\ell$ steps of~$\rsA$ starting from any state~$T$, and checks that a cycle is reached within the first~$k$ steps. \qed
\end{proof}

\section{Problems in the Polynomial Hierarchy}
\label{sec:hard-problems}

A jump in complexity happens when, instead of checking a local dynamical behaviour that only involves the forward dynamics, we search for the existence of a local behaviour somewhere in the system or the property involves the backward dynamics, i.e., the preimages of a state.

While deciding if a state is part of a short cycle can be done in deterministic polynomial time, establishing the existence of such a state is $\NP$-complete. We will see that this holds even when the length of the cycle is fixed a priori.

In the sequel, for any Boolean formula $\varphi$ in conjunctive or disjunctive normal form and with clauses $\varphi_1, \dots, \varphi_m$, for each clause $\varphi_i$ we denote by $\posvar(\varphi_j)$ (resp., $\negvar(\varphi_i)$) the set of variables appearing in~$\varphi_i$ as positive (resp., negative) literals.
\begin{theorem}
\label{thm:has-short-fixed-cycle}
For each integer~$\ell \ge 1$, it is~$\NP$-complete, on input a RS~$\rsA$, to decide if~$\rsA$ has a periodic point of period at most~$\ell$.
\end{theorem}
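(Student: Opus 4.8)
The plan is to prove both membership in $\NP$ and $\NP$-hardness. For membership, note that if $\rsA = (S,A)$ has a periodic point of period at most $\ell$, then by definition there is a state $T \subseteq S$ with $\res_{\rsA}^{p}(T) = T$ for some $1 \le p \le \ell$. Since $\ell$ is a fixed constant (not part of the input), a nondeterministic machine can guess $T$ in polynomial time, then deterministically compute $\res_{\rsA}(T), \res_{\rsA}^2(T), \dots, \res_{\rsA}^{\ell}(T)$ — each application of $\res_{\rsA}$ is polynomial-time computable — and accept iff $T$ reappears within $\ell$ steps. This places the problem in $\NP$.

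For hardness I would reduce from a suitable $\NP$-complete problem; the natural candidate, given the hint in the excerpt that \RS can evaluate Boolean formulae, is $\textsc{SAT}$ (say with $\varphi$ in conjunctive normal form over variables $x_1, \dots, x_n$). First I would build a \RS whose background set contains, for each variable $x_i$, two ``choice'' entities $t_i$ and $f_i$ (true/false witnesses), plus entities encoding the clauses and a distinguished ``satisfied'' entity $s$. The idea is to design the reactions so that a state $T$ is a periodic point of period at most $\ell$ \emph{iff} $T$ encodes a satisfying assignment, i.e.\ $T$ picks exactly one of $t_i, f_i$ for each $i$ and this assignment satisfies every clause. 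Concretely, reactions of the form $(\{t_i\}, \{f_i\}, \{t_i\})$ and $(\{f_i\}, \{t_i\}, \{f_i\})$ make any ``consistent'' variable choice a fixed point of its variable-component, while reactions checking each clause $\varphi_j$ regenerate $s$ (or the whole $T$) precisely when some literal of $\varphi_j$ is present; if a clause is unsatisfied, $s$ is not regenerated and the dynamics leaves $T$, so $T$ has no finite period through itself — or one arranges that $T$ then falls into a long transient avoiding $T$ for more than $\ell$ steps. Malformed states (containing both $t_i$ and $f_i$, or neither) must also be prevented from being short-period points, which the mutual-inhibition structure handles. Since $\ell \ge 1$ is fixed, the fixed-point case $\ell = 1$ is already enough, and the general case is an immediate consequence.

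The main obstacle will be ruling out \emph{spurious} short cycles: the reduction must guarantee that \emph{no} state of the constructed \RS — including malformed states and states that encode partial or inconsistent assignments — is periodic with period $\le \ell$ unless it corresponds to a genuine satisfying assignment. This requires a careful case analysis of $\res_{\rsA}$ on all state types, showing that any ``bad'' state is pushed into a sufficiently long aperiodic-through-itself orbit. A clean way to enforce this is to add a counter or ``flag'' mechanism: include an entity that is produced only when the clause-check fails and that, once present, persists and prevents the variable entities from stabilising, so the orbit cannot return to the guessed state within $\ell$ steps. One then checks that the whole construction, including the unary/constant encoding of $\ell$, is logarithmic-space computable, completing the $\NP$-hardness argument.
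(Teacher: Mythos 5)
Your membership argument is fine and matches the standard one. The hardness reduction, however, has a genuine gap at exactly the point you yourself flag as ``the main obstacle'', and your proposed fix makes it worse rather than better. Since an \RS is a finite dynamical system, \emph{every} orbit is ultimately periodic, so the constructed system necessarily has \emph{some} periodic point no matter what; the reduction therefore cannot work by ``preventing'' bad states from being periodic --- it must funnel every non-satisfying and malformed state into a cycle whose period is strictly greater than $\ell$. Your persistent failure flag does the opposite: if the flag, once produced, persists forever while the variable entities are suppressed, then the state consisting of the flag alone is a fixed point, i.e.\ a periodic point of period $1 \le \ell$, so your \RS answers ``yes'' for every $\varphi$ and the reduction fails. (The empty state is a similar hazard in your dual-rail $t_i/f_i$ encoding: if no reaction has both empty reactant set and an inhibitor set disjoint from $\varnothing$ producing something, then $\res_\rsA(\varnothing) = \varnothing$ is yet another spurious fixed point.)

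The paper's construction resolves this with one extra idea you are missing: instead of a single absorbing flag it uses $\ell+1$ entities $\spadesuit_0, \ldots, \spadesuit_\ell$ wired into a cycle $\set{\spadesuit_0} \to \cdots \to \set{\spadesuit_\ell} \to \set{\spadesuit_0}$ of length exactly $\ell+1$. A clause-violation reaction emits $\spadesuit_0$, and any state containing some $\spadesuit_t$ collapses in one step to the singleton $\set{\spadesuit_{(t+1) \bmod (\ell+1)}}$, killing all other entities. Thus every state either is a fixed point encoding a satisfying assignment (the paper uses the direct presence/absence encoding $T \subseteq X$, which avoids your malformed $t_i/f_i$ cases altogether) or falls into the one unavoidable cycle, whose period $\ell+1$ is just too long to count. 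If you replace your persistent flag by such a modular counter of length $\ell+1$, your argument goes through.
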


\begin{proof}
We prove the~$\NP$-hardness of this problem by reduction from the Boolean satisfiability problem. For any Boolean formula~$\varphi = \varphi_1 \land \cdots \land \varphi_m$ in conjunctive normal form over the set of variables~$X = \set{x_1, \ldots, x_n}$ we build a reaction system~$\rsA = (S,A)$ with set of entities~$S = X \cup Z$, where~$Z = \set{\spadesuit_0,\ldots,\spadesuit_\ell}$. For any state~$T \subseteq S$, the set $T\cap X$ encodes a assignment of~$\varphi$ in which $x_i$ has true value iff~$x_i \in T\cap X$. First of all, the set $A$ contains the following group of reactions
\begin{align}
  \label{react:evaluate-formula}
  &(\negvar(\varphi_j), \posvar(\varphi_j) \cup Z, \set{\spadesuit_0})&
  &\text{for } 1 \le j \le m
\end{align}
Thus, when the system is in a state $T\subseteq X$, this group of reactions produces the entity~$\spadesuit_0$ (\ie at least one among them is enabled in $T$)  if there exists a clause~$\varphi_j$ not satisfied by the assignment encoded by $T$ (hence $\varphi$ itself is not satisfied). Notice that these reactions are disabled in any state $T$ containing some $\spadesuit_t$.

The set $A$ also includes the following reactions   
\begin{align}
  \label{react:preserve-assignment}
  &(\set{x_i}, Z, \set{x_i})&
  &\text{for } 1 \le i \le n
\end{align}
which preserve the value of the assignment of $\varphi$ encoded by a current state~$T$ in the next state when $T \subseteq X$.

In this way, the reactions of type~\eqref{react:evaluate-formula} and~\eqref{react:preserve-assignment} ensure that~$\res_\rsA(T) = T$ for any~$T \subseteq X$ encoding an assignment satisfying~$\varphi$, and~$\res_\rsA(T) = T \cup \set{\spadesuit_0}$ if~$T \subseteq X$ but~$T$ codifies an assignment which does not satisfy~$\varphi$.

The following last group of reactions constituting $A$ creates a cycle of length~$\ell+1$ formed exactly by the entities of~$Z$
\begin{align}
  \label{react:spadesuit-cycle}
  &(\set{\spadesuit_t}, \set{\spadesuit_s : 0 \le s < t}, \set{\spadesuit_{(t+1) \bmod (\ell+1)}})&
  &\text{for } 0 \le t \le \ell
\end{align}
Indeed, it trivially holds that~$\res_\rsA(\set{\spadesuit_t}) = \set{\spadesuit_{(t+1) \bmod (\ell+1)}}$. Moreover, if the system is in a state $T$ with $T \cap Z \ne \varnothing$, then its next state is~$\set{\spadesuit_{t+1 \bmod \ell+1}}$, where~$t = \min \set{s : \spadesuit_s \in T}$. Hence, the reaction system~$\rsA$ has both a cycle of length~$\ell+1$ (namely~$\set{\spadesuit_0} \to \cdots \to \set{\spadesuit_\ell} \to \set{\spadesuit_0}$) and at least a fixed point, that is a cycle of length at most~$\ell$, if and only if the formula~$\varphi$ admits a satisfying assignment.

Since the mapping~$\varphi \mapsto \rsA$ can be computed in polynomial time, the problem is~$\NP$-hard. The membership in~$\NP$ follows by the existence of a non-deterministic algorithm guessing an initial state~$T$ and checking in polynomial time whether~$T$ is again reached within~$\ell$ steps. \qed
\end{proof}

The difficulty of the problem does not increase when the length of the cycle is given in input.

\begin{theorem}
\label{thm:has-short-cycle}
Given a RS~$\rsA$ and a unary integer~$\ell \ge 1$ as input, it is~$\NP$-complete to decide if~$\rsA$ has a periodic point of period at most~$\ell$.
\end{theorem}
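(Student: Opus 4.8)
The plan is to observe that this statement differs from Theorem~\ref{thm:has-short-fixed-cycle} only in that~$\ell$ is now part of the input (in unary) rather than a fixed constant, so both directions should go through with essentially the same construction. For the lower bound, I would reduce again from Boolean satisfiability, but now with a fixed target cycle length, say~$\ell = 1$: given a CNF formula~$\varphi$, build exactly the reaction system~$\rsA$ of the previous proof with the single ``$\spadesuit$'' entity (the set~$Z$ of size~$\ell+1 = 2$, or even a trivial one-element gadget), so that every satisfying assignment of~$\varphi$ corresponds to a fixed point of~$\rsA$ and, conversely, the only periodic points of period~$1$ are such assignments. Since~$\ell$ is fixed to a constant in the reduction, it is certainly producible in unary in logarithmic (hence polynomial) space, and the map~$\varphi \mapsto (\rsA, \ell)$ is the same logspace/polytime map as before. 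This gives~$\NP$-hardness.

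For membership in~$\NP$, the algorithm is the natural one: nondeterministically guess a state~$T \subseteq S$, then deterministically simulate~$\res_\rsA$ for~$\ell$ steps starting from~$T$, and accept iff~$T = \res_\rsA^t(T)$ for some~$1 \le t \le \ell$. The only point requiring a word of justification, compared to the fixed-$\ell$ case, is that this simulation runs in time polynomial in the input size: each application of~$\res_\rsA$ costs~$O(|S|\cdot|A|)$, and we perform at most~$\ell$ of them, so the total work is~$O(\ell \cdot |S| \cdot |A|)$, which is polynomial precisely because~$\ell$ is given in unary. Hence the problem is in~$\NP$, and combined with the hardness it is~$\NP$-complete.

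I do not expect a genuine obstacle here; the statement is a routine strengthening of Theorem~\ref{thm:has-short-fixed-cycle}, and the main thing to get right is the bookkeeping ensuring the unary encoding of~$\ell$ keeps the verifier polynomial-time. One could alternatively note that Theorem~\ref{thm:has-short-fixed-cycle} already proves hardness for every fixed~$\ell$, so hardness of the input-$\ell$ version is immediate by taking, e.g.,~$\ell = 1$ and padding; the only new content is the uniform~$\NP$ upper bound, which the unary-encoding argument above supplies.
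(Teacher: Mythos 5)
Your proposal is correct and follows essentially the same route as the paper: $\NP$-hardness is inherited from Theorem~\ref{thm:has-short-fixed-cycle} (which holds for each fixed~$\ell$, e.g.\ $\ell=1$), and membership in~$\NP$ uses the same guess-and-simulate verifier, with the unary encoding of~$\ell$ ensuring the simulation runs in polynomial time. No gaps.
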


\begin{proof}
The problem is~$\NP$-hard by reduction from any of the problems of Theorem~\ref{thm:has-short-fixed-cycle}, which are~$\NP$-hard for every fixed~$\ell \ge 1$. The same algorithm mentioned at the end of the proof of Theorem~\ref{thm:has-short-fixed-cycle} proves the membership in~$\NP$, since the extra input~$\ell$ is given in unary notation. \qed
\end{proof}

Checking if a state is part of a short cycle can be performed in deterministic polynomial time. However, checking if there exists another state reaching it after a certain amount of steps makes the problem $\NP$-complete.

\begin{theorem}
\label{thm:T-in-short-cycle-with-long-handle}
Given a RS~$\rsA = (S,A)$, a state~$T \subseteq S$, and two unary integers~$\ell \ge 1$ and~$k$ as input, it is~$\NP$-complete to decide if~$T$ is a periodic point of period at most~$\ell$ and there exists a state~$U \subseteq S$ leading to~$T$ with preperiod at least~$k$.
\end{theorem}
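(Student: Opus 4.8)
The plan is to combine the cycle-construction idea from Theorem~\ref{thm:has-short-fixed-cycle} with a ``handle'' of controllable length, so that the existence of a long preperiod leading to $T$ becomes equivalent to satisfiability. Membership in $\NP$ is the easy part: the property ``$T$ is a periodic point of period at most $\ell$'' is checkable in polynomial time (as in Theorem~\ref{thm:state-in-short-cycle}, since $\ell$ is unary), while the second conjunct is certified by nondeterministically guessing the state $U$ and simulating at least $k$ steps of $\rsA$ from $U$, checking that $T$ is reached only after the $k$-th step. Since $k$ is given in unary, this simulation runs in polynomial time, so the whole problem is in $\NP$.

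For $\NP$-hardness I would reduce from Boolean satisfiability. Given $\varphi = \varphi_1 \land \cdots \land \varphi_m$ over variables $X = \set{x_1,\dots,x_n}$, build $\rsA$ with background set $S = X \cup \set{c_0,\dots,c_{k}} \cup \set{\spadesuit}$, where the $c_t$ are ``counter'' entities forming the handle. The key reactions are: (i) copies of the satisfaction-testing reactions~\eqref{react:evaluate-formula}, but now producing $\spadesuit$ and firing only from states $T \subseteq X \cup \set{c_{k}}$ (i.e., gated on the counter having reached its final value $c_{k}$), so that $\spadesuit$ is produced precisely when the assignment encoded by $T \cap X$ fails some clause; (ii) the assignment-preserving reactions~$(\set{x_i},\dots,\set{x_i})$ of type~\eqref{react:preserve-assignment}, active as long as the handle has not yet produced $\spadesuit$; (iii) counter-advancing reactions $(\set{c_t}, \set{c_s : s \neq t} \cup \set{\spadesuit}, \set{c_{t+1}})$ for $0 \le t < k$, which push the handle forward one step at a time while leaving the $X$-part untouched; and (iv) a reaction absorbing $c_{k}$ away once the test has been performed, so that from a satisfying assignment the state settles into the fixed point $T \coloneqq$ (the satisfying assignment, with no counter or $\spadesuit$ entities). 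One must also ensure the ``failure'' branch does \emph{not} reach $T$: when $\spadesuit$ is produced, all preservation and counter reactions are inhibited, so the dynamics falls into a separate short cycle (e.g.\ a fixed point $\set{\spadesuit}$ or the empty set) disjoint from $T$. Then $T$ has period $1 \le \ell$ always; and $T$ has a $k$-preperiod predecessor, namely $U = (T \cap X) \cup \set{c_0}$, \emph{if and only if} the assignment $T \cap X$ satisfies $\varphi$ — because only then does the handle survive all $k$ counter steps without producing $\spadesuit$ and finally land on $T$. Choosing $T \cap X$ to encode an arbitrary fixed assignment is fine, but to make the reduction work from $\varphi$ alone one instead fixes $T$ to encode, say, the all-true assignment and adds reactions making $T$ reachable-with-long-handle iff \emph{some} assignment satisfies $\varphi$; more cleanly, guess the assignment inside $U$ and have the handle verify it, letting $T$ be a canonical ``success'' fixed point independent of the assignment.

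The main obstacle I anticipate is the interplay between the two conjuncts of the property: $T$ itself must be a \emph{genuine} periodic point of small period (this is trivially arranged by making $T$ a fixed point), but simultaneously I must guarantee that the only way to reach $T$ with a \emph{long} preperiod is through the satisfiability witness, and in particular that no short-preperiod or spurious path reaches $T$ when $\varphi$ is unsatisfiable. This forces careful inhibitor bookkeeping: the $\spadesuit$-producing reactions, the counter reactions, and the assignment-preservation reactions must be mutually gated so that exactly one ``phase'' is active at a time (counting phase, then a single test, then either settle to $T$ or divert to the $\spadesuit$-sink), and malformed states (multiple counter entities, or counter entities coexisting with $\spadesuit$) must behave harmlessly. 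Verifying that the preperiod of $U$ is \emph{exactly} $k$ (hence ``at least $k$''), and not accidentally shorter because of an early $\spadesuit$, is the delicate point; using the ``minimum index'' inhibitor trick from reaction group~\eqref{react:spadesuit-cycle} to linearize the counter should handle it. Finally, the reduction is clearly computable in polynomial (indeed logarithmic) space, giving $\NP$-hardness.
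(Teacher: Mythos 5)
Your overall strategy (reduce from SAT, make $T$ a fixed point, and tie the existence of a long-preperiod ancestor to satisfiability) is the right one, but there are two genuine problems. The central one is in the hardness reduction: the state $T$ is part of the \emph{output} of the reduction and must be computable from $\varphi$ alone, yet in your construction the intended fixed point is ``the satisfying assignment, with no counter or $\spadesuit$ entities'', which is exactly the object whose existence you are trying to decide. You notice this and propose to fix it by ``letting $T$ be a canonical success fixed point independent of the assignment'', but that is precisely the nontrivial part of the construction and it is left unbuilt: with your reactions (ii) preserving the $X$-part all the way through, different satisfying assignments land on different fixed points, so no single canonical $T$ is reached. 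The paper resolves this by \emph{not} preserving the assignment: it takes $S = X \cup C$ with $C = \set{\varphi_1,\dots,\varphi_m}$, reactions $(\set{x_i},\varnothing,\set{\varphi_j : x_i \in \posvar(\varphi_j)})$ and $(\varnothing,\set{x_i},\set{\varphi_j : x_i \in \negvar(\varphi_j)})$ for each variable, plus $(C,X,C)$; every state is then mapped in one step to the set of clauses its variable-part satisfies, so $T=C$ is a fixed point admitting an ancestor of preperiod at least~$1$ if and only if $\varphi$ is satisfiable, and the whole reduction uses $k=\ell=1$. Your length-$k$ counter and $\spadesuit$-sink machinery is unnecessary (the reduction gets to choose $k$), and all of the ``inhibitor bookkeeping'' you worry about disappears.

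The second, smaller issue is in the membership argument: ``checking that $T$ is reached only after the $k$-th step'' does not certify preperiod at least~$k$ when $\ell>1$, because the preperiod counts the steps until the orbit \emph{enters the cycle}, not until it hits $T$ itself; a guessed $U$ could enter $T$'s cycle at some other cycle state well before step~$k$ and only arrive at $T$ later, so your check would accept a $U$ whose preperiod is smaller than~$k$. The correct certificate (as in the paper) is a state $U$ reaching $T$ in exactly~$k$ steps such that the segment $U,\res_\rsA(U),\dots,T$ meets $T$'s cycle only in $T$; one may assume the preperiod is exactly~$k$ by replacing a witness of preperiod $k'\ge k$ with its $(k'-k)$-th iterate, which also bounds the length of the simulation by $k+\ell$.
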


\begin{proof}
We proceed by reduction from the Boolean satisfiability problem. 
For any Boolean formula~$\varphi = \varphi_1 \land \cdots \land \varphi_m$ in conjunctive normal form over the set of variables~$X = \set{x_1, \ldots, x_n}$ we build a reaction system~$\rsA = (S,A)$ with set of entities~$S = X \cup C$, where~$C = \set{\varphi_1, \ldots, \varphi_m}$. As in the proof of Theorem~\ref{thm:has-short-fixed-cycle}, for any state~$T \subseteq S$, the set $T\cap X$ encodes a truth assignment of~$\varphi$. The set $A$ contains the following two groups of reactions
\begin{align*}
  &\big( \set{x_i}, \varnothing, \set{\varphi_j : x_i \in \posvar(\varphi_j)} \big)&
  &\text{for } 1 \le i \le n \\
  &\big( \varnothing, \set{x_i}, \set{\varphi_j : x_i \in \negvar(\varphi_j)} \big)&
  &\text{for } 1 \le i \le n
\end{align*}
By evaluating $\varphi$, these reactions map a state~$T$ to the set of clauses satisfied by the assignment encoded by~$T \cap X$, ignoring any element of~$T \cap C$.

The further reaction in $A$
\begin{align*}
  (C, X, C)
\end{align*}
ensures that the state consisting exactly of all clauses is a fixed point, since these latter are preserved when  they appear all together without any variable from~$X$.

Then, the resulting reaction system~$\rsA $ has a fixed point~$T=C$, that is a cycle of length at most~$\ell=1$, with a state~$U$ leading to~$T$ and having preperiod at least~$k=1$, if and only if~$U \cap X$ encodes a assignment which satisfies~$\varphi$. 

Since the mapping~$\varphi \mapsto \rsA$ is polynomial-time computable, the problem is thus~$\NP$-hard. A polynomial-time nondeterministic algorithm for the problem under consideration exists. It guesses a state~$U$ leading to~$T$ in exactly~$k$ steps, verifies if~$T$ belongs to a cycle of length at most~$\ell$, and checks that the set~$\set{U,  \res_{\rsA}(U), \ldots, T}$ and the cycle only have~$T$ as a common state. Therefore, the membership in $\NP$ follows.
 \qed
\end{proof}

More in general, the problem of exploring the dynamics backwards is $\NP$-complete.

\begin{theorem}
\label{thm:k-ancestor}
Given a RS~$\rsA = (S,A)$, a state~$T \subseteq S$, and a unary integer~$k$ as input, it is~$\NP$-complete to decide if~$T$ has a~$k$-ancestor.
\end{theorem}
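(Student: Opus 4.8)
The plan is to prove $\NP$-completeness by establishing both membership and hardness. Membership in $\NP$ is straightforward: a nondeterministic algorithm guesses a candidate state $U \subseteq S$, then deterministically computes $\res_{\rsA}^k(U)$ in polynomial time (since $k$ is given in unary and each application of $\res_{\rsA}$ takes polynomial time), and finally checks whether $\res_{\rsA}^k(U) = T$. Thus the problem lies in $\NP$.

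For the $\NP$-hardness, I would reduce from the Boolean satisfiability problem. Given a CNF formula $\varphi = \varphi_1 \land \cdots \land \varphi_m$ over variables $X = \set{x_1, \ldots, x_n}$, the idea is to build a reaction system $\rsA = (S,A)$ in which a designated state $T$ has a $1$-ancestor (i.e., a preimage) if and only if $\varphi$ is satisfiable. I would take $S = X \cup C$ with $C = \set{\varphi_1, \ldots, \varphi_m}$ and reuse the clause-evaluation reactions from the proof of Theorem~\ref{thm:T-in-short-cycle-with-long-handle}: for each $1 \le i \le n$, a reaction $(\set{x_i}, \varnothing, \set{\varphi_j : x_i \in \posvar(\varphi_j)})$ and a reaction $(\varnothing, \set{x_i}, \set{\varphi_j : x_i \in \negvar(\varphi_j)})$. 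When the system is in a state $U$, these reactions produce exactly the set of clauses satisfied by the assignment encoded by $U \cap X$, ignoring the $C$-part of $U$. I would then set the target state to be $T = C$, the state consisting of all clauses. By construction, $\res_{\rsA}(U) = C$ if and only if $U \cap X$ encodes a satisfying assignment of $\varphi$, so $T = C$ has a $1$-ancestor precisely when $\varphi$ is satisfiable; since $k$ is part of the input, taking $k = 1$ suffices. Because the mapping $\varphi \mapsto (\rsA, T, k=1)$ is computable in polynomial (indeed logarithmic) space, this establishes $\NP$-hardness.

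A subtle point worth checking — and the step I expect to require the most care — is that $T = C$ genuinely has no ancestor when $\varphi$ is unsatisfiable, and more importantly that producing $C$ exactly (rather than a proper subset or a superset) is the right condition. Since the products of the listed reactions are always subsets of $C$, no spurious entities from $X$ can appear in $\res_{\rsA}(U)$, so $\res_{\rsA}(U) \subseteq C$ always; and $\res_{\rsA}(U) = C$ forces every clause to be satisfied by $U \cap X$. One should also confirm that when $\varphi$ is satisfiable the witnessing assignment's state $U$ maps to $C$ in a single step regardless of whether $U$ contains clause-entities, which holds because the reactions ignore $U \cap C$ entirely. With these verifications in place the reduction is correct, and combined with the $\NP$ membership argument above the theorem follows.
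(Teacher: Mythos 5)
Your proof is correct and follows essentially the same route as the paper: a reduction from SAT in which the clause-evaluation reactions send any state $U$ to the set of clauses satisfied by $U \cap X$, so that $T = C$ has a $1$-ancestor iff $\varphi$ is satisfiable, combined with the obvious guess-and-verify argument for membership in $\NP$. The only difference is cosmetic: the paper reuses the full construction of Theorem~\ref{thm:T-in-short-cycle-with-long-handle} including the reaction $(C,X,C)$, and must therefore add an extra entity $\heartsuit$ to prevent $C$ from being its own preimage, whereas you simply omit that reaction, which achieves the same effect and which your closing verification correctly covers.
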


\begin{proof}
Perform the same reduction as in the proof of Theorem~\ref{thm:T-in-short-cycle-with-long-handle}, except adding the entity~$\heartsuit$ to~$S$ and changing the reaction~$(C,X,C)$ to~$(C, X, C \cup \set{\heartsuit})$. Now the obtained reaction system has the fixed point $C \cup \set{\heartsuit}$ rather than~$C$, and~$C$ leads to~$C \cup \set{\heartsuit}$ in one step. Then, state~$C$ has~$T$ as preimage, \ie $T$ is a~$1$-ancestor of $C$, if and only if~$T \cap X$ satisfies~$\varphi$ for some set~$T \subseteq S$. Thus, the $\NP$-hardness follows. The problem is also in~$\NP$. Indeed, once a state~$U \subseteq S$ is guessed, it can be checked in polynomial time whether~$\res_\rsA^k(U) = T$. \qed
\end{proof}

Establishing if there exists a state that leads to a short cycle after a given number of steps, i.e., a state far enough from the cycle, turns out be another $\NP$-complete problem.

\begin{theorem}
\label{thm:reachability-cycle-exists}
Given a RS~$\rsA = (S,A)$ and two unary integers~$\ell \ge 1$ and~$k$ as input, it is~$\NP$-complete to decide if there exists a state~$T \subseteq S$ with period at most~$\ell$ and preperiod at least~$k$.
\end{theorem}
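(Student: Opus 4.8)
The plan is to prove $\NP$-hardness by a reduction from Boolean satisfiability that reuses the formula-evaluating gadget from the proof of Theorem~\ref{thm:has-short-fixed-cycle} and grafts onto it a small counter producing a handle of length~$k$. We may assume $k \ge 1$: for $k = 0$ the constraint ``preperiod at least~$0$'' is vacuous, so the problem coincides with the one of Theorem~\ref{thm:has-short-cycle}. Given a CNF formula $\varphi = \varphi_1 \land \cdots \land \varphi_m$ over $X = \set{x_1,\ldots,x_n}$, I would build $\rsA = (S,A)$ with $S = X \cup \set{d_1,\ldots,d_k} \cup Z$, where $Z = \set{\spadesuit_0,\ldots,\spadesuit_\ell}$, and let $A$ consist of: the reactions $(\negvar(\varphi_j), \posvar(\varphi_j) \cup Z, \set{\spadesuit_0})$ for $1 \le j \le m$ (type~\eqref{react:evaluate-formula}); the reactions $(\set{x_i}, Z, \set{x_i})$ for $1 \le i \le n$ (type~\eqref{react:preserve-assignment}); the spadesuit-cycle reactions of type~\eqref{react:spadesuit-cycle} over~$Z$; and the counter reactions $(\set{d_j}, Z, \set{d_{j-1}})$ for $2 \le j \le k$. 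Exactly as in Theorem~\ref{thm:has-short-fixed-cycle}, a state $T \subseteq X$ is a fixed point when $T$ satisfies~$\varphi$ and otherwise maps to $T \cup \set{\spadesuit_0}$, any state meeting~$Z$ falls within one step into the cycle $\set{\spadesuit_0} \to \cdots \to \set{\spadesuit_\ell} \to \set{\spadesuit_0}$ of length $\ell+1$, and in the absence of spadesuits the counter reactions merely replace~$d_j$ by~$d_{j-1}$ (making $d_1$ vanish).

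For the correctness of the reduction I would argue two directions. If $\alpha \subseteq X$ satisfies~$\varphi$, then the state $\alpha \cup \set{d_k}$ passes through $\alpha \cup \set{d_{k-1}}, \ldots, \alpha \cup \set{d_1}$ and reaches the fixed point~$\alpha$ after exactly~$k$ steps; none of the intermediate states can recur (nothing ever produces~$d_j$ once it has been consumed), so this is a state of preperiod exactly~$k$ and period~$1 \le \ell$. Conversely, if $\varphi$ is unsatisfiable, I would show that \emph{every} state has period $\ell+1 > \ell$: a state meeting~$Z$ enters the spadesuit cycle in one step, while a state $T$ with $T \cap Z = \varnothing$ has $T \cap X$ falsifying some clause, so the corresponding reaction of type~\eqref{react:evaluate-formula} fires and $\res_\rsA(T)$ contains~$\spadesuit_0$, after which $T$ too reaches the spadesuit cycle. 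Hence no state has period at most~$\ell$, let alone preperiod at least~$k$, and since $\varphi \mapsto (\rsA,\ell,k)$ is log-space computable (with $\ell,k$ in unary) the problem is $\NP$-hard.

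Membership in~$\NP$ is routine. First, a suitable state exists if and only if one of preperiod \emph{exactly}~$k$ and period at most~$\ell$ exists, because if $T$ has preperiod $h \ge k$ then $\res_\rsA^{h-k}(T)$ has preperiod exactly~$k$ and the same period. A nondeterministic algorithm then guesses a state~$T$, computes $T, \res_\rsA(T), \ldots, \res_\rsA^{k+\ell}(T)$ in polynomial time (both $k$ and $\ell$ being unary), and accepts iff $\res_\rsA^{k}(T) = \res_\rsA^{k+p}(T)$ for some $1 \le p \le \ell$ while $\res_\rsA^{k-1}(T)$ is not among $\res_\rsA^{k}(T), \ldots, \res_\rsA^{k+\ell-1}(T)$; since the cycle states are closed under~$\res_\rsA$, the first condition certifies period at most~$\ell$ and preperiod at most~$k$, and the second certifies preperiod at least~$k$, so together they certify exactly what is wanted.

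I expect the only genuinely delicate point to be the unsatisfiable direction, where one must make sure that no ``malformed'' state — carrying several counter entities, stray spadesuits, or nothing at all — accidentally sits on a short cycle or on a long handle leading into one. The construction is arranged so that each state is flushed into the length-$(\ell+1)$ spadesuit cycle within at most two steps, which collapses the question into the existence of a short cycle; the subtle case is the empty (or variable-free) state, for which one uses the observation that an unsatisfiable CNF always contains a clause with no negative literal, whose reaction of type~\eqref{react:evaluate-formula} has empty reactant set and is therefore enabled everywhere outside the spadesuit cycle.
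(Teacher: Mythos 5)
Your proposal is correct, but it takes a noticeably longer route than the paper. The paper's hardness argument is a one-liner: the instance with $k=0$ is exactly the problem of Theorem~\ref{thm:has-short-cycle} (``does $\rsA$ have a periodic point of period at most~$\ell$?''), which is already $\NP$-hard, so a trivial specialization reduction suffices; membership in $\NP$ is by guessing $T$ and checking, as you do. You instead observe the $k=0$ coincidence only to set it aside, and then build a fresh reduction from \SAT{} for $k \ge 1$ by grafting a chain $d_k \to d_{k-1} \to \cdots \to d_1 \to \varnothing$ onto the gadget of Theorem~\ref{thm:has-short-fixed-cycle}, so that a satisfying assignment $\alpha$ yields the state $\alpha \cup \set{d_k}$ with preperiod exactly $k$ and period $1$, while unsatisfiability flushes every state into the length-$(\ell+1)$ spadesuit cycle within two steps. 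This is sound (the counter entities are never produced, so the handle states cannot recur, and your case analysis of states meeting or avoiding $Z$ is the same as in Theorem~\ref{thm:has-short-fixed-cycle}), and it buys something the paper's proof does not: hardness for every \emph{fixed} $k \ge 1$, not merely via the $k=0$ instances. Your $\NP$ algorithm is also a more explicit and careful version of the paper's (reducing to preperiod exactly $k$ and certifying both bounds from the first $k+\ell$ iterates). One small inaccuracy in your closing remark: the reaction of a clause with no negative literals is \emph{not} enabled ``everywhere outside the spadesuit cycle'' --- it is still inhibited by that clause's positive variables --- but this is harmless, since your main argument already shows that under unsatisfiability \emph{every} state disjoint from $Z$ falsifies \emph{some} clause and hence produces $\spadesuit_0$.
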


\begin{proof}
The problem from Theorem~\ref{thm:has-short-cycle} is a special case of the one under consideration and is obtained when~$k=0$. Hence, by reduction from the former problem,  the one we are dealing with is~$\NP$-hard too. Its membership in~$\NP$ follows by the existence of an algorithm similar to that at the end of the proof of Theorem~\ref{thm:T-in-short-cycle-with-long-handle}, except that the state~$T$ must also be guessed, rather than being given as input. \qed
\end{proof}

Checking if every state of a reaction system has exactly one preimage is the same as deciding if the result function is a bijection. This problem can be shown to be $\coNP$-complete by a variation of a proof by Ehrenfeucht and Rozenberg~\cite{Ehrenfeucht2007a}.

\begin{theorem}
\label{thm:bijection}Given a RS~$\rsA = (S,A)$, it is~$\coNP$-complete to decide if~$\res_\rsA$ is a bijection, that is, if every state of~$\rsA$ is periodic.
\end{theorem}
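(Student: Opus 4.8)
The plan is to prove the two bounds separately. For the upper bound I would first observe that, since the state space $\parts{S}$ is finite, $\res_{\rsA}$ is a bijection exactly when it is injective, and that ``$\res_{\rsA}$ is a bijection'' coincides with ``every state of $\rsA$ is periodic'': a bijection of a finite set is a permutation, hence every state lies on a cycle, and conversely if every state is periodic the cycles partition $\parts{S}$ and $\res_{\rsA}$ maps each of them bijectively onto itself. The complement property, non-injectivity, is witnessed by a pair of distinct states $T \neq U$ with $\res_{\rsA}(T) = \res_{\rsA}(U)$, and a guessed pair (each state of size at most $|S|$) is checkable in polynomial time by evaluating the two images. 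Hence non-bijectivity is in $\NP$, so the problem is in $\coNP$.

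For $\coNP$-hardness I would reduce from the (\,$\coNP$-complete\,) unsatisfiability problem for formulas in conjunctive normal form. Given $\varphi = \varphi_1 \land \cdots \land \varphi_m$ over $X = \set{x_1,\dots,x_n}$, I would build $\rsA = (S,A)$ with $S = X \cup \set{\heartsuit}$, where $\heartsuit \notin X$, containing the preservation reactions $(\set{x_i},\varnothing,\set{x_i})$ for $1 \le i \le n$ together with one reaction $(\set{\heartsuit} \cup \negvar(\varphi_j),\posvar(\varphi_j),\set{\heartsuit})$ for each clause $\varphi_j$. The behaviour to be verified is that, writing $T = V \cap X$ for the variable part of a state $V \subseteq S$, one has $\res_{\rsA}(V) = V$ whenever $\heartsuit \notin V$, or whenever $\heartsuit \in V$ and $T$ falsifies some clause of $\varphi$, while $\res_{\rsA}(V) = T$ when $\heartsuit \in V$ and $T$ satisfies $\varphi$; the crux is that, in the presence of $\heartsuit$, the clause reactions regenerate $\heartsuit$ precisely when the variable part falsifies $\varphi$, and that $\heartsuit$ is never created from a state not containing it.

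Granting this, correctness is immediate. If $\varphi$ is unsatisfiable then every $T \subseteq X$ falsifies some clause, so $\res_{\rsA}$ is the identity, hence a bijection. If $\varphi$ is satisfiable, let $T_0 \subseteq X$ encode a satisfying assignment; then $\res_{\rsA}(T_0) = T_0$ and $\res_{\rsA}(T_0 \cup \set{\heartsuit}) = T_0$, so the two distinct states $T_0$ and $T_0 \cup \set{\heartsuit}$ collide and $\res_{\rsA}$ is not injective. Since $\varphi \mapsto \rsA$ is computable in logarithmic space, this gives the desired hardness.

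The step I expect to be the main obstacle is getting the behaviour on states that contain $\heartsuit$ exactly right: one must check that the clause reactions reproduce $\heartsuit$ if and only if the variable part of the current state falsifies $\varphi$ (so that the intended collision $T_0 \cup \set{\heartsuit} \mapsto T_0 \mapsfrom T_0$ occurs exactly when $\varphi$ is satisfiable), and that no spurious collisions are introduced either among the states contained in $X$ or among the states containing $\heartsuit$. This formula‑evaluation gadget is essentially the one already exploited in the proof of Theorem~\ref{thm:has-short-fixed-cycle}, adapted from the construction of Ehrenfeucht and Rozenberg.
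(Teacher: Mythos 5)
Your proof is correct and follows essentially the same route as the paper's: the same $\heartsuit$-gadget that evaluates the formula and makes $T_0$ and $T_0\cup\set{\heartsuit}$ collide exactly on satisfying assignments, and the same guess-a-colliding-pair argument for membership in $\coNP$. The only difference is that you reduce from CNF-unsatisfiability while the paper reduces from DNF-tautology; these are dual problems, and your clause reactions are the paper's with the roles of positive and negative literals exchanged, so the construction is the same up to negating the formula.
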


\begin{proof}
We prove~$\coNP$-hardness by reduction from the tautology problem for Boolean formulae in disjunctive normal form~\cite{Papadimitriou1993a}. Given a formula~$\varphi = \varphi_1 \lor \ldots \lor \varphi_m$ over the variables~$X = \set{x_1, \ldots, x_n}$, we build the reaction system~$\rsA = (S,A)$ where~$S = X \cup \set{\heartsuit}$ and the reactions are 
\begin{align}
  \label{react:evaluate-dnf}
  &(\posvar(\varphi_j) \cup \set{\heartsuit}, \negvar(\varphi_j), \set{\heartsuit})&
  &\text{for } 1 \le j \le m \\
  \label{react:preserve-x-dnf}
  &(\set{x_i}, \varnothing, \set{x_i})&
  &\text{for } 1 \le i \le n
\end{align}
As usual, for any state $T$, $T\cap X$ encode a truth assignment of $\varphi$. In this way, if~$\heartsuit$ occurs in the currente state, the reactions of type~\eqref{react:evaluate-dnf} evaluate each conjunctive clause (which is satisfied iff all positive variables are set to true and all negative ones to false) and preserve the element~$\heartsuit$  if the clause (and thus the whole formula) is satisfied. The reactions of type~\eqref{react:preserve-x-dnf} preserve all variables in the current state.

Hence, the behaviour of~$\rsA$ is as follows. If the current state~$T \subseteq X$, only reactions of type~\eqref{react:preserve-x-dnf} are enabled, and thus~$\res_\rsA(T) = T$. On the other hand, if~$\heartsuit\in T$, we have~$\res_\rsA(T \cup \set{\heartsuit}) = T \cup \set{\heartsuit}$ if at least one reaction of type~\eqref{react:evaluate-dnf} is enabled, that is, if~$\varphi$ is satisfied by the assignment encoded by~$T$; otherwise, it holds that~$\res_\rsA(T \cup \set{\heartsuit}) = T$. As a consequence, each state of~$\rsA$ is a fixed point, and thus a cycle, iff~$\varphi$ is a tautology; otherwise, there exists a state with two preimages, namely, a state~$T \subseteq X$ encoding a non-satisfying assignment and having both~$T$ and~$T \cup \set{\heartsuit}$ as preimages.

The considered problem belongs to~$\coNP$, since there exists a non-deterministic algorithm guessing two distinct states~$T,U \subseteq S$ and checking whether~$\res_\rsA(T) = \res_\rsA(U)$ in polynomial time. \qed
\end{proof}

The previous construction actually proves another similar result:

\begin{theorem}
\label{thm:identity}
Given a RS~$\rsA = (S,A)$, it is~$\coNP$-complete to decide if~$\res_\rsA$ is the identity function, that is, if each state is a fixed point. \qed
\end{theorem}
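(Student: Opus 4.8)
The plan is to reuse the reduction from the proof of Theorem~\ref{thm:bijection} essentially verbatim, observing that the reaction system~$\rsA$ it produces satisfies the stronger equivalence we need here: $\res_\rsA$ is the identity function if and only if the input DNF formula~$\varphi$ is a tautology. First I would recall the dynamics of that~$\rsA = (S,A)$ with $S = X \cup \set{\heartsuit}$: every state $T \subseteq X$ is a fixed point because of the reactions of type~\eqref{react:preserve-x-dnf}; a state $T \cup \set{\heartsuit}$ with $T \subseteq X$ is a fixed point when $T$ satisfies $\varphi$ (some reaction of type~\eqref{react:evaluate-dnf} is then enabled and regenerates~$\heartsuit$), while $\res_\rsA(T \cup \set{\heartsuit}) = T \ne T \cup \set{\heartsuit}$ otherwise. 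Hence, if $\varphi$ is a tautology every state is a fixed point and $\res_\rsA$ is the identity, whereas if $\varphi$ is falsifiable, any non-satisfying assignment $T \subseteq X$ gives a state $T \cup \set{\heartsuit}$ which is not fixed, so $\res_\rsA$ is not the identity.

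Since the map $\varphi \mapsto \rsA$ is computable in logarithmic space and deciding whether a DNF formula is a tautology is $\coNP$-complete, this reduction yields the $\coNP$-hardness of the problem at hand.

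For the upper bound I would argue that the complementary problem belongs to $\NP$: a nondeterministic algorithm guesses a state $T \subseteq S$ and accepts if and only if $\res_\rsA(T) \ne T$, which is checkable in polynomial time by evaluating all reactions of~$\rsA$ on~$T$. Therefore the problem lies in $\coNP$, and combined with the hardness argument this completes the proof.

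I do not anticipate any genuine obstacle, as this is merely a second exploitation of a construction whose behaviour has already been fully analysed; the only point deserving a line of verification is that, for a falsifiable~$\varphi$, the exhibited witness state~$T \cup \set{\heartsuit}$ is indeed moved by~$\res_\rsA$, which is immediate from the inhibitor/reactant conditions of the reactions of type~\eqref{react:evaluate-dnf}.
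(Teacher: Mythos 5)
Your proposal is correct and coincides with the paper's own (implicit) argument: the paper states Theorem~\ref{thm:identity} with no separate proof precisely because the construction of Theorem~\ref{thm:bijection} already yields that $\res_\rsA$ is the identity iff the DNF formula is a tautology, and the complement is in $\NP$ by guessing a non-fixed state. Your verification of the dynamics and the witness state $T \cup \set{\heartsuit}$ for a falsifiable $\varphi$ is exactly the intended reasoning.
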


Establishing whether a local attractor cycle has a small attraction basin is $\coNP$-complete. 

\begin{theorem}
\label{thm:small-basin-for-T}
Given a RS~$\rsA = (S,A)$, a state~$T \subseteq S$, and two unary integers~$\ell \ge 1$ and~$d$ as input, it is~$\coNP$-complete to decide if~$T$ belongs to a local attractor cycle of length at most~$\ell$ with attraction basin of diameter at most~$d$.
\end{theorem}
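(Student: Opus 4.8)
The plan is to prove $\coNP$-completeness in two halves: membership in $\coNP$ by a nondeterministic polynomial-time algorithm for the complement, and $\coNP$-hardness by a logspace reduction from the tautology problem for Boolean formulae in disjunctive normal form, much as in the proof of Theorem~\ref{thm:bijection}.

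For membership I would observe that $(\rsA,T,\ell,d)$ is a negative instance exactly when either $T$ is not a periodic point of period at most $\ell$, or the attraction basin of the cycle of $T$ has diameter greater than $d$ (the ``local attractor'' clause being immaterial here, since the reduction below produces a genuine local attractor). The complement algorithm first simulates $\ell+1$ steps of $\res_{\rsA}$ from $T$ and accepts if $\res_{\rsA}^{j}(T)\neq T$ for all $1\le j\le\ell$; otherwise it has computed the cycle $\mathcal{U}$ of $T$ (at most $\ell$ states), guesses a state $V\subseteq S$, simulates $d+1$ further steps, and accepts iff $\res_{\rsA}^{d+1}(V)\in\mathcal{U}$ while $\res_{\rsA}^{d}(V)\notin\mathcal{U}$. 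Since distinct cycles are disjoint and invariant, this condition forces $\res_{\rsA}^{d}(V)$ to be non-periodic, so such a $V$ lies in the basin of $\mathcal{U}$ with preperiod exactly $d+1$; conversely, truncating any trajectory that reaches $\mathcal{U}$ after more than $d$ steps yields such a $V$. As $\ell$ and $d$ are unary all simulations are polynomial, so the complement is in $\NP$.

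For hardness, given a DNF formula $\varphi=\varphi_1\lor\cdots\lor\varphi_m$ over $X=\set{x_1,\ldots,x_n}$ and any fixed $d\ge 2$, I would build $\rsA=(S,A)$ with $S=X\cup\set{\spadesuit_1,\ldots,\spadesuit_d}\cup\set{\heartsuit}$ — where, as usual, a state $W\subseteq X$ encodes the assignment making $x_i$ true iff $x_i\in W$ — and reactions
\begin{align*}
  &\react{\posvar(\varphi_j)}{\negvar(\varphi_j)\cup\set{\spadesuit_1,\ldots,\spadesuit_d,\heartsuit}}{\set{\heartsuit}} && \text{for } 1\le j\le m, \\
  &\react{\varnothing}{\set{\spadesuit_1,\ldots,\spadesuit_d,\heartsuit}}{\set{\spadesuit_1}}, \\
  &\react{\set{\spadesuit_t}}{\set{\heartsuit}}{\set{\spadesuit_{t+1}}} && \text{for } 1\le t\le d-1, \\
  &\react{\set{\spadesuit_d}}{\set{\heartsuit}}{\set{\heartsuit}}, \qquad \react{\set{\heartsuit}}{\varnothing}{\set{\heartsuit}},
\end{align*}
and I would output the instance $(\rsA,\set{\heartsuit},1,d)$. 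One checks that $\set{\heartsuit}$ is a fixed point attracting every state — hence a local attractor cycle of length $1$ — and that the dynamics is as follows: a state containing $\heartsuit$ goes to $\set{\heartsuit}$; a state containing some $\spadesuit_t$ but not $\heartsuit$ merely advances its counter and reaches $\set{\heartsuit}$ within $d$ steps; and a clean assignment $W\subseteq X$ emits $\spadesuit_1$ always, together with $\heartsuit$ precisely when $W$ satisfies $\varphi$. Thus a satisfying assignment traces $W\to\set{\heartsuit,\spadesuit_1}\to\set{\heartsuit}$ (preperiod $2$, since $\heartsuit$ now disables the counter), while a falsifying assignment traces $W\to\set{\spadesuit_1}\to\set{\spadesuit_2}\to\cdots\to\set{\spadesuit_d}\to\set{\heartsuit}$ (preperiod $d+1$). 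Hence the basin of $\set{\heartsuit}$ has diameter exactly $d$ when $\varphi$ is a tautology — the value $d$ is already witnessed by $\set{\spadesuit_1}$ — and at least $d+1>d$ otherwise, so the produced instance is positive iff $\varphi$ is a tautology. The reduction is plainly logspace, so the problem is $\coNP$-hard and therefore $\coNP$-complete.

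The delicate step is the one-step routing of assignments: one cannot test ``$W$ falsifies $\varphi$'' with a single reaction, since that is a conjunctive condition over the clauses, so instead $\spadesuit_1$ is produced unconditionally and the co-produced $\heartsuit$ — present iff $W$ satisfies some clause — is used both to abort the counter chain (whence the inhibitor $\set{\heartsuit}$ on the $\spadesuit$-reactions) and to pull the orbit into the attractor in two steps. Without that inhibitor, satisfying assignments would run through the whole chain as well and also acquire preperiod $d+1$, collapsing the reduction; with it, the only states at distance $d$ from $\set{\heartsuit}$ are the harmless counter states and, exactly when $\varphi$ is not a tautology, the falsifying assignments. The hypothesis $d\ge2$ is harmless (it is the value hard-wired into the produced instance) and is precisely what guarantees that satisfying assignments themselves stay within distance $d$.
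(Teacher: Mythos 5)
Your proof is correct and follows essentially the same route as the paper's: a reduction from DNF tautology in which every state is funnelled into a one-element fixed point, with falsifying assignments forced onto a strictly longer path so that the basin diameter exceeds the bound exactly when $\varphi$ is not a tautology. The paper achieves this more economically, with a single extra entity $\spadesuit$, the two-step detour $U \cup \set{\heartsuit} \to U \to \set{\spadesuit}$, and the hard-wired value $d=1$, whereas you use a length-$d$ counter chain $\spadesuit_1,\ldots,\spadesuit_d$ with $d \ge 2$; both variants are sound, and your tracing of the dynamics (in particular that every ``dirty'' state reaches $\set{\heartsuit}$ within $d$ steps, so only falsifying assignments can exceed the bound) checks out. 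One caveat on membership: your reason for setting aside the existential clause in the definition of local attractor (namely, that the reduction produces a genuine local attractor) conflates membership with hardness — membership must hold for arbitrary inputs, and your complement algorithm answers incorrectly on an instance where $T$ is periodic with period at most $\ell$ but its cycle has no external predecessor (e.g.\ when $\res_\rsA$ is the identity), since then no witness $V$ exists, the complement machine rejects, and the instance is wrongly declared positive even though $T$ belongs to no local attractor at all. That said, the paper's own one-line membership argument has exactly the same blind spot (``the basin properly contains the cycle'' is an $\NP$-type condition whose conjunction with the $\coNP$ diameter bound would need separate justification), so this is a shared imprecision rather than a divergence from the intended proof.
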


\begin{proof}
We use a reduction from tautology similar to that introduced in the proof of Theorem~\ref{thm:bijection}. For any formula ~$\varphi = \varphi_1 \lor \ldots \lor \varphi_m$ over the variables~$X = \set{x_1, \ldots, x_n}$, consider the reaction system from the aforementioned reduction.  We add an extra entity~$\spadesuit$ to its background set and slightly change the original reactions as follows:
\begin{align}
  \label{react:evaluate-dnf-2}
  &(\posvar(\varphi_j) \cup \set{\heartsuit}, \negvar(\varphi_j) \cup \set{\spadesuit}, \set{\heartsuit})&
  &\text{for } 1 \le j \le m \\
  \label{react:preserve-x-dnf-2}
  &(\set{x_i,\heartsuit}, \set{\spadesuit}, \set{x_i})&
  &\text{for } 1 \le i \le n
\end{align}
Furthermore, we add the two extra reactions 
\begin{align*}
(\varnothing, \set{\heartsuit}, \set{\spadesuit})\\
(\set{\spadesuit}, \varnothing, \set{\spadesuit})
\end{align*}
producing~$\spadesuit$ whenever~$\heartsuit$ is missing, and preserving~$\spadesuit$, respectively.

The resulting reaction system~$\rsA$ behaves as follows. Let~$U \subseteq X$. By evaluating $\varphi$ on the basis the assignment encoded by $U$, it follows that~$\res_\rsA(U \cup \set{\heartsuit}) = U \cup \set{\heartsuit}$ if~$U$ encodes an assignment satisfying~$\varphi$, and~$\res_\rsA(U \cup \set{\heartsuit}) = U$ otherwise. Furthermore, either when~$\spadesuit$ belongs to the current state (which is thus either $U \cup \set{\spadesuit}$ or $U \cup \set{\spadesuit,\heartsuit}$), or when neither~$\spadesuit$ nor~$\heartsuit$ do (\ie the current state is $U$), it holds that~$\res_\rsA(U) = \res_\rsA(U \cup \set{\spadesuit}) = \res_\rsA(U \cup \set{\spadesuit,\heartsuit}) = \set{\spadesuit}$. Hence, state~$\set{\spadesuit}$ has a~$2$-ancestor if and only if the formula has a \emph{non}-satisfying assignment~$U$. In that case, we have~$\res_\rsA^2(U \cup \set{\heartsuit}) = \res_\rsA(U \cup \set{\spadesuit}) = \set{\spadesuit}$. By letting~$T = \set{\spadesuit}$, $\ell=1$, and~$d=1$, and since the mapping~$\varphi \mapsto (\rsA,T,\ell,d)$ is polynomial-time computable, we obtain the~$\coNP$-hardness of the problem.

The problem is in~$\coNP$ since once guessed a state~$U$ it is possible to check in polynomial time if it falsifies the required property.
\qed
\end{proof}

On the other hand, deciding if a small attractor cycle has a large attraction basin is an $\NP$-complete problem.

\begin{theorem}
\label{thm:large-basin-for-T}
Given a RS~$\rsA = (S,A)$, a state~$T \subseteq S$, and two unary integers~$\ell \ge 1$ and~$d$ as input, it is~$\NP$-complete to decide if~$T$ belongs to a local attractor cycle of length at most~$\ell$ with attraction basin of diameter at least~$d$.
\end{theorem}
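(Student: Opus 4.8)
The plan is to get $\NP$-hardness almost immediately from Theorem~\ref{thm:T-in-short-cycle-with-long-handle}, and then to argue membership in $\NP$ with a little more care. The observation behind the hardness is that, for $d \ge 1$, the property ``$T$ belongs to a local attractor cycle of length at most $\ell$ with attraction basin of diameter at least $d$'' is equivalent to the property of Theorem~\ref{thm:T-in-short-cycle-with-long-handle}, namely ``$T$ is a periodic point of period at most $\ell$ and some state leads to $T$ with preperiod at least $d$''. Indeed, if $T$ is periodic of period at most $\ell$ then its cycle $Q$ is a cycle of length at most $\ell$ containing $T$; the only local attractor cycle that can contain $T$ is $Q$ itself (a nonempty invariant subset of a cycle is the whole cycle), it is a local attractor precisely when some state outside it maps into it, and its basin has diameter at least $d$ precisely when some basin state, equivalently some state leading to $T$, has preperiod at least $d$; finally, a state of preperiod $\ge d \ge 1$ leading to $T$ automatically provides such an external preimage. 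Since the $\NP$-hardness proof of Theorem~\ref{thm:T-in-short-cycle-with-long-handle} already builds instances with $k = \ell = 1$, the very same reaction systems, taking $\ell = d = 1$, establish $\NP$-hardness of the present problem; the degenerate case $d = 0$ coincides with $d = 1$, since every local attractor has basin diameter at least $1$. A self-contained reduction for arbitrary $d$ is also possible: replay the construction of Theorem~\ref{thm:T-in-short-cycle-with-long-handle} with a length-$d$ ``countdown'' prepended, so that a satisfying assignment of the input formula reaches the fixed point $T = C$ (the set of all clauses) after exactly $d$ steps while a non-satisfying one drifts off to $\varnothing$; the fixed-point reaction and the countdown reactions must be equipped with inhibitors so that, in the unsatisfiable case, the whole of $C$ is produced only from $C$ itself.

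For the $\NP$ upper bound I would first decide whether $T$ has period at most $\ell$ by simulating $\ell$ steps of $\rsA$ from $T$ and checking whether $\res_{\rsA}^{j}(T) = T$ for some $1 \le j \le \ell$; if so, the cycle $Q$ of $T$, of length at most $\ell$, is read off from the same simulation. Next, for $d \ge 1$, I would guess states $V_0, \ldots, V_d$ and accept iff $\res_{\rsA}(V_i) = V_{i+1}$ for $0 \le i < d$, $V_d \in Q$, and $V_0, \ldots, V_{d-1} \notin Q$ (for $d = 0$ one instead guesses a single external preimage of $Q$). Such a chain forces $V_0$ to have preperiod exactly $d$ with respect to $Q$ and exhibits $V_{d-1}$ as an external state feeding $Q$, so its existence is equivalent to $T$ belonging to the local attractor cycle $Q$ of length at most $\ell$ with basin of diameter at least $d$; conversely, from any basin state $V$ of preperiod $p \ge d$ one reads off the chain $\res_{\rsA}^{\,p-d}(V), \ldots, \res_{\rsA}^{\,p}(V)$. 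Every step runs in polynomial time because $\ell$ and $d$ are given in unary and $\res_{\rsA}$ is polynomial-time computable.

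The only genuinely subtle point, and the reason the membership proof has to be phrased through a bounded chain, is that the preperiod of a basin state can be exponential in the size of $\rsA$, so ``$V$ lies in the basin of $Q$'' is not directly certifiable; one instead uses that ``the basin of $Q$ has diameter at least $d$'' is equivalent to the short, locally checkable statement ``some state reaches $Q$ in exactly $d$ steps and not before''. In the self-contained variant of the hardness reduction, the corresponding obstacle is instead making the length-$d$ countdown interact cleanly with the clause entities, so that no spurious mixed state leaks into the cycle of $T$ when the formula is unsatisfiable, and this is exactly what the added inhibitors are engineered to prevent.
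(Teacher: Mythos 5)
Your proof is correct, but the hardness argument follows a genuinely different route from the paper's. The paper reuses the construction of Theorem~\ref{thm:small-basin-for-T}: there $\set{\spadesuit}$ is a fixed point whose attraction basin has diameter at least~$2$ precisely when the DNF formula is not a tautology, so taking $T=\set{\spadesuit}$, $\ell=1$, $d=2$ gives a reduction from the $\NP$-complete non-tautology problem. You instead prove that, for $d \ge 1$, the property to be decided is \emph{equivalent} to the property of Theorem~\ref{thm:T-in-short-cycle-with-long-handle} with $k=d$ --- the only local attractor cycle that can contain $T$ is the cycle of $T$ itself, and a basin state of preperiod at least $d \ge 1$ automatically supplies the external state required by the definition of local attractor --- and then inherit hardness from the satisfiability-based construction used there. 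This equivalence is a clean observation that the paper does not make explicit, and it checks out (including your remark that $d=0$ collapses to $d=1$); what the paper's choice buys is merely the reuse of a construction already set up for the companion Theorem~\ref{thm:small-basin-for-T}. On membership the two proofs agree in substance, but you spell out what the paper compresses into ``the same algorithm with reversed acceptance and rejection'': the right certificate is a state reaching the cycle of $T$ in exactly $d$ steps and not before, which is polynomial-time checkable since $d$ is given in unary and which correctly avoids having to certify basin membership for states of possibly exponential preperiod. The self-contained countdown reduction you sketch at the end is under-specified, but it is not needed; the main argument stands without it.
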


\begin{proof}
The state~$\set{\spadesuit}$ in the proof of Theorem~\ref{thm:small-basin-for-T} has an attraction basin of diameter at least~$2$ iff the Boolean formula~$\varphi$ is \emph{not} a tautology, and the latter is the statement of a~$\NP$-complete problem (since its negation is a~$\coNP$-complete one). Indeed, the same algorithm as in the proof of Theorem~\ref{thm:small-basin-for-T}, but with reversed acceptance and rejection, proves the membership in~$\NP$. \qed
\end{proof}

Establishing the existence of a small attractor cycle with a large attraction basin remains $\NP$-complete.

\begin{theorem}
\label{thm:large-basin-exists}
Given a RS~$\rsA = (S,A)$ and two unary integers~$\ell \ge 1$ and~$d$ as input, it is~$\NP$-complete to decide if there exists a local attractor cycle of length at most~$\ell$ with attraction basin of diameter at least~$d$.
\end{theorem}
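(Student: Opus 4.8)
The plan is to establish both membership in~$\NP$ and $\NP$-hardness.

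\emph{Membership in~$\NP$.} The plan is to guess a certificate consisting of a periodic point~$P$ of~$\rsA$ — which determines a candidate attractor cycle~$\mathcal{U} = \set{P, \res_\rsA(P), \res_\rsA^2(P), \ldots}$ — together with a state~$U$ meant to witness the large diameter. The verifier first simulates $\ell$ steps of $\rsA$ from $P$, checking that $P$ reappears within $\ell$ steps; this confirms that $\mathcal{U}$ is an invariant set consisting of a single cycle of length at most~$\ell$ and computes $\mathcal{U}$ explicitly (it has at most $\ell$ elements, and $\ell$ is unary). Then it simulates $d$ steps from $U$ and checks that $\res_\rsA^d(U) \in \mathcal{U}$ while $\res_\rsA^j(U) \notin \mathcal{U}$ for every $0 \le j < d$. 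I would argue that these conditions are equivalent to ``$U$ belongs to the attraction basin of $\mathcal{U}$ and has preperiod at least~$d$'': if some $\res_\rsA^j(U)$ with $j < d$ were periodic, the cycle through it would contain $\res_\rsA^d(U)$ and hence coincide with $\mathcal{U}$, contradicting $\res_\rsA^j(U) \notin \mathcal{U}$. In particular $\res_\rsA^{d-1}(U) \notin \mathcal{U}$ maps into $\mathcal{U}$, so $\mathcal{U}$ is genuinely a local attractor (when $d \ge 1$; for $d = 0$ the verifier additionally guesses and checks an external preimage of a state of $\mathcal{U}$), and its basin has diameter at least $d$. All computations run in polynomial time since $\ell$ and $d$ are unary and $|\mathcal{U}| \le \ell$.

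\emph{$\NP$-hardness.} The plan is to reduce from the non-tautology problem for Boolean formulae in disjunctive normal form, which is $\NP$-complete (its complement being the $\coNP$-complete DNF tautology problem). Given a DNF formula~$\varphi = \varphi_1 \lor \cdots \lor \varphi_m$, I would take exactly the reaction system~$\rsA$ constructed in the proof of Theorem~\ref{thm:small-basin-for-T} (the same one reused in Theorem~\ref{thm:large-basin-for-T}) and output the instance $(\rsA, \ell = 1, d = 2)$. Recall from that construction that the state $\set{\spadesuit}$ is a fixed point whose attraction basin has diameter at least~$2$ if and only if $\varphi$ is not a tautology (a non-satisfying assignment $U$ gives the length-$2$ handle $U \cup \set{\heartsuit} \to U \to \set{\spadesuit}$). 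The only additional fact needed is that $\set{\spadesuit}$ is, up to irrelevant cycles, the unique local attractor cycle of $\rsA$: from the explicit description of $\res_\rsA$ in that proof, the cycles of $\rsA$ are $\set{\spadesuit}$ together with the fixed points $\set{U \cup \set{\heartsuit}}$ for each assignment $U$ satisfying $\varphi$, and each of the latter has itself as its only preimage, hence is not a local attractor. Consequently $\rsA$ has a local attractor cycle of length at most~$1$ with attraction basin of diameter at least~$2$ if and only if $\set{\spadesuit}$ does, i.e.\ if and only if $\varphi$ is not a tautology. Since $\varphi \mapsto (\rsA, 1, 2)$ is computable in logarithmic space, this proves $\NP$-hardness.

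The main obstacle I anticipate is the verification in the hardness argument that no \emph{spurious} local attractor cycle of $\rsA$ has a large attraction basin: one must enumerate all cycles of the reaction system and check both the basin diameter and the ``local attractor'' condition for each. This is routine given the fully worked-out dynamics of $\rsA$ from the proof of Theorem~\ref{thm:small-basin-for-T}, but it is precisely the step that makes the existential version genuinely different from the fixed-$T$ version of Theorem~\ref{thm:large-basin-for-T}. A minor secondary point is the $d = 0$ corner case in the $\NP$ membership argument, which needs the extra guessed preimage noted above.
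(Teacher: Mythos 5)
Your proposal is correct and follows essentially the same route as the paper: both reduce from DNF non-tautology via the reaction system of Theorems~\ref{thm:small-basin-for-T}/\ref{thm:large-basin-for-T} with $\ell=1$, $d=2$, and both reduce the existential question to the fixed-$T$ question for $T=\set{\spadesuit}$ (the paper notes that $\set{\spadesuit}$ is the only state with basin diameter at least~$1$, while you observe that the other fixed points are not local attractors — equivalent observations). Your $\NP$ membership argument (guess a periodic point and a far witness, with the $d=0$ corner case handled separately) is a more detailed and careful version of the paper's one-line guess-and-check.
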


\begin{proof}
Consider the reactions system~$\rsA$ from the proof of Theorem~\ref{thm:large-basin-for-T}; all of its states are either fixed points or eventually reach state~$\set{\spadesuit}$. Hence,~$\set{\spadesuit}$ is the only state with attraction basin of diameter at least~$1$. Hence, asking whether there exists a fixed point (i.e., a cycle of length~$\ell=1$) with an attraction basin of diameter at least~$d=2$ is the same as asking whether~$T = \set{\spadesuit}$ has this property, which is~$\NP$-complete. The problem under consideration is in~$\NP$, since a state~$T$ can be guessed before the same algorithm in the proof of Theorem~\ref{thm:large-basin-for-T} is used.
\end{proof}

Differently from the previous case, the problem of establishing the existence of a small attractor cycle with a small attraction basin is higher in the polynomial hierarchy.

\begin{theorem}
\label{thm:small-basin-exists}
Given a RS~$\rsA = (S,A)$ and two unary integers~$\ell \ge 1$ and~$d$ as input, it is~$\SigmaTwoP$-complete to decide if there exists a local attractor cycle of length at most~$\ell$ with attraction basin of diameter at most~$d$.
\end{theorem}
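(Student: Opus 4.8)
The plan is to prove membership in $\SigmaTwoP$ and $\SigmaTwoP$-hardness separately. Membership is routine: a state $T$ and (since the quantifier structure is $\exists\forall$) a verification that $T$ belongs to an attractor cycle of length at most $\ell$ can be guessed in the existential phase, checked in polynomial time by iterating $\res_\rsA$ for $\ell$ steps, and then a $\coNP$ oracle call — or a universally quantified polynomial-time predicate — verifies that \emph{every} state reaching the cycle does so with preperiod at most $d$, equivalently that no state has preperiod exactly $d+1$ leading into the cycle containing $T$. The diameter-at-most-$d$ condition is inherently co-nondeterministic (one must rule out all long handles), so the whole problem sits at the second level with an $\exists$ on the outside: $\SigmaTwoP$.

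For hardness I would reduce from a canonical $\SigmaTwoP$-complete problem, namely $\exists\forall$-SAT: given a formula $\psi = \exists \vec{y}\, \forall \vec{z}\, \varphi(\vec{y},\vec{z})$ with $\varphi$ in, say, DNF (so that $\forall \vec z\,\varphi$ is naturally checkable by a $\coNP$-style machine), decide whether it is true. The idea is to combine the two previous constructions: the $\vec y$-variables play the role of the guessed \emph{target state} (as in Theorems~\ref{thm:has-short-fixed-cycle} and~\ref{thm:large-basin-exists}, where the existence of a cycle with a certain basin property encodes an existential choice over an assignment), while the $\vec z$-variables are folded into a tautology-style gadget of the kind used in Theorems~\ref{thm:bijection} and~\ref{thm:small-basin-for-T}, so that the \emph{diameter} of the attractor cycle indexed by $\vec y$ is small (at most $d$) if and only if $\forall \vec z\,\varphi(\vec y,\vec z)$ holds. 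Concretely, I would build a RS whose state space contains a coordinate block for $\vec y$, a coordinate block for $\vec z$, a "clock/heart" entity $\heartsuit$ and a "reset" entity $\spadesuit$ analogous to the previous proofs. For each fixed assignment to $\vec y$ there is a candidate fixed point $F_{\vec y}$ (encoding $\vec y$ together with the clauses it already forces, and with $\heartsuit$ present); the reactions evaluating $\varphi$ preserve $\heartsuit$ exactly when the current $\vec z$-assignment keeps $\varphi$ true, and, as in Theorem~\ref{thm:small-basin-for-T}, dropping $\heartsuit$ triggers the $\spadesuit$-collapse. The upshot: $F_{\vec y}$ is a fixed point whose attraction basin has diameter at least $2$ precisely when some $\vec z$ falsifies $\varphi(\vec y,\vec z)$; so $F_{\vec y}$ has a basin of diameter \emph{at most} $d=1$ iff $\forall \vec z\,\varphi(\vec y,\vec z)$. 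Setting $\ell=1$ and $d=1$, the RS has a local attractor cycle of length at most $\ell$ with basin of diameter at most $d$ iff there exists $\vec y$ with $\forall \vec z\,\varphi(\vec y,\vec z)$, i.e. iff $\psi$ is true.

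The hard part will be making the gadget truly \emph{local} rather than global: I must ensure that the only short-cycle-with-small-basin candidates are exactly the states $F_{\vec y}$, so that guessing "there exists a local attractor cycle with small basin" cannot be satisfied by some spurious cycle elsewhere in the state space. In the earlier proofs this was handled by arranging that every state is either a fixed point or flows into a single designated sink; here I instead want every state to flow (in at most two steps) either to some $F_{\vec y}$ (when $\heartsuit$ is present and the $\vec z$-part is "consistent") or to the $\spadesuit$-region, and crucially I must make the $\spadesuit$-region \emph{not} qualify — e.g. by giving $\set{\spadesuit}$ a large basin, or by padding it so any attractor touching it has length exceeding $\ell$. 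I also need to double-check the diameter bookkeeping: states of the form $U\cup\set{\heartsuit}$ with $U$ a non-satisfying $\vec z$-extension must land on $F_{\vec y}$ in exactly the number of steps that makes the preperiod $d+1$, not fewer, so the $\heartsuit$/$\spadesuit$ handshake has to be tuned so that a bad $\vec z$ produces a handle strictly longer than $d$ rather than being absorbed immediately. Once these locality and preperiod-length details are pinned down, the logarithmic- (indeed polynomial-)time computability of $\psi \mapsto (\rsA,\ell,d)$ is immediate, and combined with the $\SigmaTwoP$ membership above this yields $\SigmaTwoP$-completeness.
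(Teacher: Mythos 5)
Your membership argument and your overall reduction strategy (from $\exists\forall\SAT$ with a DNF matrix, fixed points indexed by assignments to the existentially quantified block, and the diameter of the basin encoding the universal quantifier) coincide with the paper's. The genuine gap is in the mechanism you sketch for the hardness direction, and it is exactly the point you yourself flag as ``diameter bookkeeping'': if a falsifying $\vec z$-extension causes $\heartsuit$ to be dropped and the state to collapse into a separate $\spadesuit$-region (as in the gadget of Theorem~\ref{thm:small-basin-for-T}), then that state leaves the basin of $F_{\vec y}$ altogether. The basin of $F_{\vec y}$ would then consist only of the satisfying extensions, each at distance one, so $F_{\vec y}$ has a small-diameter basin whether or not \emph{all} $\vec z$ satisfy $\varphi(\vec y,\cdot)$; the reduction degenerates to deciding $\exists\vec y\,\exists\vec z\,\varphi$, i.e.\ plain satisfiability, which only yields $\NP$-hardness. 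On top of that you must still neutralize the $\spadesuit$-sink as a spurious small-basin candidate, a second problem your sketch leaves open.

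The paper's construction closes both problems at once by having \emph{no} separate sink: it adds the unconditional reaction $(\varnothing,\varnothing,\set{\spadesuit})$ and the reaction $(\set{\spadesuit},\varnothing,\set{\heartsuit})$, and it preserves only the $X$-variables, not the $Y$-variables. Consequently every state with $X$-part $X_1$ funnels into the single fixed point $X_1\cup\set{\heartsuit,\spadesuit}$, and the only question is the delay: states already containing $\heartsuit$ or $\spadesuit$ arrive in one step, while a pure state $X_1\cup Y_1$ arrives in one step if some clause reaction fires (i.e.\ $\varphi(X_1,Y_1)$ holds) and in two steps otherwise, since it first passes through $X_1\cup\set{\spadesuit}$. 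Hence the basin of $X_1\cup\set{\heartsuit,\spadesuit}$ has diameter at most $1$ iff $\forall Y_1\,\varphi(X_1,Y_1)$; these $2^{|X|}$ fixed points are the only cycles, and each is automatically a local attractor, so locality comes for free. Replacing your drop-$\heartsuit$/collapse design with this delayed-funnel design, and setting $\ell=d=1$, completes the reduction as you intended.
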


\begin{proof}
First of all, we show the membership in~$\SigmaTwoP$ of the problem.  There exists an alternating Turing machine able to guess a state~$T \subseteq S$ and then, by iterating the~$\res_\rsA$ function, check that it belongs to a cycle of length at most~$\ell$. Subsequently, the machine also guesses a state~$U \subseteq S$ and check if it reaches that cycle in~$d+1$ steps (this would prove that the attraction basin has diameter larger than~$d$; if no such guess is possible, then the attraction basin has diameter at most $d$). If this is the case, the machine rejects, while it accepts, otherwise; the overall result of the machine is acceptance if and only if the answer to the considered problem is true. Therefore, the membership in~$\SigmaTwoP$ follows.

We now prove the~$\SigmaTwoP$-hardness of the problem by reduction from~$\exists\forall\SAT$, i.e., the problem of deciding if~$\exists X \forall Y \varphi(X,Y)$, where~$\varphi$ is a Boolean formula in disjunctive normal form with conjunctive clauses~$\varphi_1, \ldots, \varphi_k$ over the variables~$X = \set{x_1, \ldots, x_m}$ and~$Y = \set{y_1, \ldots, y_n}$.

For any of such formulae $\varphi$,  define the reaction system~$\rsA = (S,A)$ where~$S = X \cup Y \cup \set{\heartsuit,\spadesuit}$ and~$A$ consists of the two group of reactions
\begin{align}
  &(\posvar(\varphi_j), \negvar(\varphi_j), \set{\heartsuit})&
  &\text{for each clause } \varphi_j
  \label{react:evaluate}\\
  &(\set{x_i}, \varnothing, \set{x_i})&
  &\text{for each variable } x_i \in X
  \label{react:preserve-x}
\end{align}
which depend on the Boolean formula, and the following further reactions
\begin{align}
  &(\varnothing, \varnothing, \set{\spadesuit})
  \label{react:make-spade} \\
  &(\set{\heartsuit}, \varnothing, \set{\heartsuit})
  \label{react:preserve-heart} \\
  &(\set{\spadesuit}, \varnothing, \set{\heartsuit})
  \label{react:spade-makes-heart}
\end{align}
We are going to describe the behaviour of $\rsA$. First of all, consider a state of the form~$X_1 \cup Y_1$ for some~$X_1 \subseteq X$ and~$Y_1 \subseteq Y$. As usual, such a state represents an assignment of $\varphi$, where the variables appearing in~$X_1 \cup Y_1$ have true value, and the missing ones have false value. A reaction of type~\eqref{react:evaluate} is then enabled if and only if the corresponding conjunctive clause~$\varphi_j$ is satisfied. Furthermore, a reaction of type~\eqref{react:preserve-x} is enabled for each element of~$X_1$, while the reaction~\eqref{react:make-spade} is always enabled. As a consequence, we have~$\res_\rsA(X_1 \cup Y_1) = X_1 \cup \set{\heartsuit,\spadesuit}$ if at least one clause (and thus the disjunction~$\varphi$) is satisfied by the assignment represented by~$X_1 \cup Y_1$; otherwise, we have~$\res_\rsA(X_1 \cup Y_1) = X_1 \cup \set{\spadesuit}$.

Furthermore, we have
\begin{align*}
  \res_\rsA(X_1 \cup Y_1 \cup \set{\heartsuit}) =
  \res_\rsA(X_1 \cup Y_1 \cup \set{\spadesuit}) =
  \res_\rsA(X_1 \cup Y_1 \cup \set{\spadesuit,\heartsuit}) =
  X_1 \cup \set{\spadesuit,\heartsuit}
\end{align*}
irrespective of whether the assignment encoded by~$X_1 \cup Y_1$ satisfies~$\varphi$.

This means that all~$2^{|X|}$ states of the form~$X_1 \cup \set{\spadesuit,\heartsuit}$ for any~$X_1 \subseteq X$ are fixed points and~$\rsA$ admits no further fixed points or cycles. Moreover, each fixed point~$X_1 \cup \set{\spadesuit,\heartsuit}$ attracts in one step all states of the forms~$X_1 \cup Y_1 \cup \set{\heartsuit}$, $X_1 \cup Y_1 \cup \set{\spadesuit}$, and~$X_1 \cup Y_1 \cup \set{\spadesuit,\heartsuit}$ for any~$Y_1 \subseteq Y$. On the other hand, the states of the form~$X_1 \cup Y_1$ reach~$X_1 \cup \set{\spadesuit,\heartsuit}$ in one step iff~$X_1 \cup Y_1$ satisfies~$\varphi$, and in two steps, otherwise.

In conclusion, there exists a fixed point (which is a cycle of length at most~$\ell=1$), necessarily of the form~$X_1 \cup \set{\spadesuit,\heartsuit}$ for some~$X_1 \subseteq X$, with an attraction basin of diameter at most~$d=1$ if and only if~$X_1 \cup Y_1$ satisfies~$\varphi$ for all~$Y_1 \subseteq Y$. This is equivalent to~$\forall X \exists Y \varphi(X,Y)$ being true. Since the mapping~$\varphi \mapsto (\rsA, \ell, d)$ can be computed in polynomial time, the problem is~$\SigmaTwoP$-hard. \qed
\end{proof}

\section{Problems Solvable in Polynomial Space}
\label{sec:very-hard-problems}

Removing the restrictions on the length of cycles or preperiods leads to 
properties involving a potentially  exponential number of states, and algorithms checking these properties might need to explore all of them. Such algorithms, however, can be designed to use
a polynomial amount of space, since every state contains only a polynomial amount of entities.

The canonical $\PSPACE$-complete problem for reactions systems is reachability.

\begin{theorem}
\label{thm:reachability}
Given RS~$\rsA = (S,A)$ and two states~$T,U \subseteq S$, it is~$\PSPACE$-complete to decide if~$T$ leads to~$U$.
\end{theorem}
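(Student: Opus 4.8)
The plan is to prove $\PSPACE$-completeness of the reachability problem for reaction systems in the two standard directions: membership in $\PSPACE$ and $\PSPACE$-hardness.

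\paragraph{Membership in $\PSPACE$.}
First I would observe that any state of $\rsA$ is a subset of the background set $S$, hence it can be stored using $|S|$ bits, and one step of the dynamics, namely computing $\res_\rsA(W)$ from $W$, can be performed in time (and space) polynomial in $|\rsA|$ by checking, for each reaction $a \in A$, whether $R_a \subseteq W$ and $I_a \cap W = \varnothing$, and collecting the products of the enabled reactions. Since the whole state space $\parts{S}$ has size at most $2^{|S|}$, if $T$ leads to $U$ at all then it does so within $2^{|S|}$ steps (otherwise the orbit of $T$ would revisit a state without ever hitting $U$, and by determinism it would never hit $U$). So a deterministic algorithm can simply iterate $\res_\rsA$ starting from $T$ for at most $2^{|S|}$ steps, maintaining only the current state and a step counter of $|S|$ bits, and accept as soon as $U$ is reached; this uses polynomial space. (Alternatively one invokes the $\NPSPACE = \PSPACE$ reachability-in-a-graph argument, but the direct deterministic simulation is cleaner here.)

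\paragraph{$\PSPACE$-hardness.}
For hardness I would reduce from the halting problem for deterministic Turing machines with tape bounded by a polynomial in the input length — more precisely from the canonical $\PSPACE$-complete problem: given a deterministic Turing machine $M$ with a tape of $m$ cells (with $m$ given in unary) and an input $x$, does $M$ accept $x$? As in the proofs of Theorems~\ref{thm:polynomial-time-reachability} and~\ref{thm:state-in-short-cycle}, I may assume without loss of generality that $M$ has a unique accepting configuration: head in the accepting state over the first tape cell, tape entirely blank. Using the construction of Section~\ref{sec:turing-machines}, I build in logarithmic space the reaction system $\rsM = (S,A)$ simulating $M$ on $m$ cells of tape; I let $T$ be the state of $\rsM$ encoding the initial configuration of $M$ on input $x$, and $U$ the state encoding the unique accepting configuration. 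Since $\rsM$ maps the encoding of each (well-formed) configuration of $M$ to the encoding of its successor, the orbit of $T$ under $\res_\rsM$ faithfully tracks the computation of $M$ on $x$, so $T$ leads to $U$ if and only if $M$ accepts $x$. The mapping $(M,x,m) \mapsto (\rsM, T, U)$ is computable in logarithmic space by the remark at the end of Section~\ref{sec:turing-machines}, which establishes $\PSPACE$-hardness.

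\paragraph{Main obstacle.}
The routine but slightly delicate point is the correctness of the hardness reduction when $M$ does \emph{not} accept: the orbit of $T$ in $\rsM$ then either runs forever on well-formed configurations (never producing $U$), or reaches a configuration from which no transition of $M$ applies — in which case the corresponding state of $\rsM$ has no enabled reaction of type~\eqref{react:transition} and the dynamics drops to a state lacking a state entity, again never reaching the accepting encoding $U$. One should check that no ``malformed'' state of $\rsM$ accidentally equals $U$ or leads to $U$; this follows because $U$ is well-formed and the only way to obtain the accepting state entity $r_1$ in a product is via the transition reaction simulating $\delta(\cdot,\cdot) = (r,\cdot,\cdot)$, which requires a well-formed predecessor encoding the (unique) configuration whose successor is the accepting one. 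I would spell this out briefly but it is not conceptually hard; the genuine content of the theorem is simply that unbounded reachability forces exploration of an exponential orbit, which the $\PSPACE$ simulation handles and the bounded-tape Turing machine reduction exploits.
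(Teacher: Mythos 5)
Your proposal is correct and follows essentially the same route as the paper: membership via deterministic iteration of $\res_\rsA$ for at most $2^{|S|}$ steps in polynomial space, and hardness via a logspace reduction from acceptance of a deterministic Turing machine within a unary space bound~$m$, using the Section~\ref{sec:turing-machines} construction with $T$ and $U$ encoding the initial and unique accepting configurations. Your discussion of the non-accepting case (the state entity vanishing when the head would leave the allotted tape, so the orbit never reaches $U$) matches the paper's remark about the entity $q_{m+1}$ disappearing.
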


\begin{proof}
We prove the hardness of the problem by reduction from the following~$\PSPACE$-complete problem~\cite{Papadimitriou1993a}:
\begin{quote}
Given a deterministic Turing machine~$M$, a string~$x$ and a unary integer~$m$, does~$M$ accept~$x$ without using more than~$m$ tape cells?
\end{quote}
As in the proof of Theorem~\ref{thm:polynomial-time-reachability}, for any~$(M,x,m)$  we build a reaction system~$\rsA$ by exploiting the construction described in Section~\ref{sec:turing-machines} with $k=m$. By encoding again the initial configuration of~$M$ on input~$x$ as a state~$T$ of~$\rsA$ and the unique accepting configuration of~$M$ as state~$U$, we obtain the desired reduction. Indeed, if~$M$ accepts~$x$ without exceeding~$m$ tape cells, then $T$ leads to $U$. Furthermore, this never happens
if~$M$ rejects, or fails to halt, or moves the tape head on cell~$m+1$; remark that in the latter case, $T$ does not lead to $U$ since the reactions of type~\eqref{react:transition} are not defined for~$i=m+1$, and the element~$q_{m+1}$ disappears.

The problem is in~$\PSPACE$ because there exists an algorithm able to check in polynomial space whether~$T$ leads to~$U$ (such an algorithm rejects if this does not happen within~$2^{|S|}$ steps, that is the number of configurations of~$\rsA$). \qed
\end{proof}

To prove the $\PSPACE$-hardness of other problems, in this section we are going to introduce a variant of the polynomial space simulation of the Turing machine from Section~\ref{sec:turing-machines} used in the proof of Theorem~\ref{thm:reachability}. 
Since most of the statements in the next theorems involve a quantification across \emph{every} state of the reaction system, the dynamics of all the possible states will have to be considered and governed. To this extent, for any Turing machine $M$ with state set $Q$ where $f\in Q$ is the final accepting state, tape alphabet $\Sigma$, transition function $\delta$, any input $x$, and any space bound $m$ given in unary, the new construction of the reaction system and the encoding of the configurations of $M$ into related states is based on what follows:
\begin{enumerate}
  \item \label{tm:initial-point} Let $T$ be the state of the reaction system encoding the initial configuration of $M$. As in the proof of Theorem~\ref{thm:reachability}, $M$ accepts by ending up in a specific accepting configuration with the tape head on the first tape and encoded by the state $U$ in the reaction system which  will be now a fixed point; otherwise, $M$ rejects by diverging.
  \item  \label{tm:second-point}
  All states of the reaction system encoding a configuration of $M$ either lead to the fixed point $U$ (which corresponds to the accepting configuration reached by $M)$ or enter a cycle. We stress that this set of states also includes the ones encoding configurations of $M$ that are not reachable from the initial configuration. A timer is now coupled to the encoding of configurations of $M$, in order to force all states of the reaction systems belonging to this set to reach $T$ once the timer reaches an appropriate value.
  \item \label{tm:final-point} If a state of the reaction system does not encode a valid Turing machine configuration and a valid timer, it is detected and forced to reach $T$. Then, its dynamics will evolve as described in statement~\ref{tm:initial-point}.
\end{enumerate}
In this way, \emph{all states} lead to the fixed point $U$ only if $T$ itself leads to $U$ (\ie if $M$ reaches the accepting configuration within the space bound $m$).
This is accomplished by building the new reaction system $\rsM=(S, A)$ as follows.

Consider the background set of the reaction system from Section~\ref{sec:turing-machines} and keep unchanged the way to encode any configuration of $M$ into a state of reaction system state.  Now, an entity is added to that background set, namely the entity $\restart$ which represents the ``reset'' to be executed in some cases from a state of $\rsM$ to the state $T$, which has well defined dynamics and encoding the initial configuration of $M$.

Reminding that $f_1$ is the entity corresponding to the accepting state $f$ in the accepting configuration of $M$ (with the tape head located on the first cell), we introduce the following reaction in order to force the element $f_1$ to constitute the fixed point $U=\{f_1\}$ when it is the only entity of the state\footnote{Actually, $U$ encodes the state of $\rsM$ coming immediately after the one encoding the accepting configuration of $M$. However, the two state states coincide if one ignores the content of the tape and the direction.}:
\begin{align*}
  &(\set{f_1}, \varnothing, \set{f_1})
\end{align*}

Since we are interested only in accepting computations within a given space bound $m$, the set of reactions $A$ also contains the following reactions, which force a ``reset'' by generating the entity $\restart$ if the tape head reaches the $(m+1)$-th cell:
\begin{align*}
  &(\set{q_{m+1}}, \set{f_1,\restart}, \set{\restart})&
  &\text{for } q \in Q
\end{align*}

In order to deal with the simulation of $M$ by $\rsM$ upon reaching an accepting configuration or the situation when a reset is needed, reactions~\ref{react:transition} and~\ref{react:preserve-tape} from Section~\ref{sec:turing-machines} implementing every transition $\delta(q,a)=(q', a', d)$ appear now in $A$ with $f_1$ and $\restart$ as additional inhibitors:
\begin{align*}
  &(\set{q_i,a_i}, \set{f_1,\restart}, \set{q'_{i+d},a'_i})&
  &\text{for } 1 \le i \le m \\
  &(\set{a_i}, \set{q_i : q \in Q} \cup \set{f_1,\restart}, \set{a_i})&
  &\text{for } 1 \le i \le m
\end{align*}
In this way, the transition of $M$ is not executed and the tape is not preserved, thus deleting the current configuration of the simulated Turing machine.

We now describe how $\restart$ is produced by a timer in order to comply with property~\ref{tm:second-point} on page~\pageref{tm:second-point}. Denote by $p_k$ the $k$-th prime number and define $K$ as the smallest number such that $\prod_{k=1}^K p_k -1$ is greater than the number of possible distinct configuration of $M$ when working in space $m$. For  each $k=1, \ldots, K$, let $c_{k,0},\ldots,c_{k,p_k-1}$ be further entities in the background set $S$ of $\rsM$. For each $k$, the following group of reactions is included in $A$:
\begin{align*}
  &(\set{c_{k,t}}, \set{f_1,\restart}, \set{c_{k,(t+1) \bmod p_k}})&
  &\text{for } 0 \le t < p_k
\end{align*}
Each group of reactions defines a cycle of length $p_k$. When all the reactions of all groups act together every time step of the evolution of the reaction system in which $f_1$ and $\restart$ do not appear, they give rise to a cycle of length $L=\prod_{k=1}^K p_k$, since the $p_k$'s are pairwise coprime. In this way, the dynamical evolution of $\rsM$ starting from any state including $Y_0 = \set{c_{1,0},\ldots,c_{K,0}}$ and containing neither $f_1$ nor $\restart$ is associated with a timer whose value at time $t$ can be identified by $Y_t=\set{c_{1,t_1},\ldots,c_{K,t_K}}$ where $t_k=t \bmod p_k$. The timer reaches its maximum value once the state of $\rsM$ contains $Y_t=\set{c_{1,p_1-1},\ldots,c_{K,p_K-1}}$, and this happens just at time $t=L-1$.

In the following two steps, the entity $\restart$ is produced and the timer is reset to $0$, the state of the reactions system $\rsM$ is set to $C_0=T\cup Y_0$ as desired, and the simulation of $M$ by $\rsM$ restarts from the initial configuration encoded by $T$. This is accomplished by the following two reactions:
\begin{align*}
  &(\set{c_{k,p_k-1} : 1 \le k \le K}, \set{f_1, \restart}, \set{\restart})\\
  &(\set{\restart}, \set{f_1}, C_0)
\end{align*}
Remark that, if $M$ works in space $m$, then it has $|\Sigma|^m \times m \times |Q|$ possible distinct configurations. By letting $K = \lceil \log_2 \left( |\Sigma|^m \times m \times |Q| \right) \rceil + 1$, it holds that $\prod_{k=1}^K p_k - 1 = L - 1$ is larger that the number of possible configurations of $M$, since $p_k \ge 2$ for all $k$. By the Prime Number Theorem~\cite{Dudley1969a}, the $K$-th prime is asymptotically $K \ln K$. To find $K$ primes, it is therefore sufficient to test the primality of the first $O(K \ln K)$ natural numbers, where each number can be tested even using a brute force algorithm. Since this value is polynomial with respect to length of the description of $M$ and with respect to $m$ (which is given in unary), the construction of $\rsM$ we are describing can be carried out in polynomial time.

In order to comply with statement~\ref{tm:final-point} on page~\pageref{tm:final-point}, the set of reactions $A$ must also contain reactions generating $\restart$ each time the current state of the reaction systems fails to encode either a valid pair $($\emph{timer, configuration of $M$}$)$ nor it is the fixed point $\set{f_1}$. First of all, the following reactions generate $\restart$ when the current state of $\rsM$ contains zero or more than one entity encoding the pair $($\emph{state, head position of $M$}$)$:
\begin{align*}
  &(\varnothing, \set{q_i : q \in Q, 1 \le i \le m} \cup \set{f_1,\restart}, \set{\restart})\\
  &(\set{q_i,r_j}, \set{f_1,\restart}, \set{\restart})&
  &\text{for } q,r \in Q \text{ and } 1 \le i,j \le m
\end{align*}
The entity $\restart$ must also be produced each time the state of $\rsM$ does not encode exactly one symbol per tape cell:
\begin{align*}
  &(\varnothing, \set{a_i : a \in \Sigma} \cup \set{f_1,\restart}, \set{\restart})&
  &\text{for } 1 \le i \le m\\
  &(\set{a_i,b_i}, \set{f_1,\restart}, \set{\restart})&
  &\text{for } a,b \in \Sigma \text{ and } 1 \le i \le m
\end{align*}
Finally, there must also be reactions in $A$ producing $\restart$ when the state does not include a well-formed timer. That is, when for a single prime $p_k$ there does not appear exactly one entity of the form $c_{k,i}$ in the current state of $\rsM$:
\begin{align*}
  &(\varnothing, \set{c_{k,t} : 0 \le t < p_k} \cup \set{f_1,\restart}, \set{\restart})&
  &\text{for } 1 \le k \le K \\
  &(\set{c_{k,t},c_{k,s}}, \set{f_1,\restart}, \set{\restart})&
  &\text{for } 1 \le k \le K \text{ and } 0 \le t,s < p_k
\end{align*}

In this way, we have obtained a construction complying with properties~\ref{tm:initial-point}--\ref{tm:final-point} on page~\pageref{tm:initial-point} and that can be carried out in polynomial time.

We are now able to prove that checking the existence of long cycles is $\PSPACE$-complete.

\begin{theorem}
\label{thm:has-large-fixed-cycle}
Given a RS~$\rsA$ over a background set $S$ and a unary integer~$\ell \ge 1$ as input, it is~$\PSPACE$-complete to decide if~$\rsA$ has a periodic point of period at least~$\ell$.
\end{theorem}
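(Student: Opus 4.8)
The plan is to prove $\PSPACE$-membership by a direct exhaustive search and $\PSPACE$-hardness by reusing the timer-equipped simulation $\rsM=(S,A)$ of a space-bounded Turing machine constructed immediately above the statement. For membership, observe that the property is decided in polynomial space by brute force: using an $(|S|+1)$-bit counter, range over all states $T\subseteq S$; for each $T$, iterate $\res_\rsA$ from $T$ while maintaining a second $(|S|+1)$-bit step counter, and accept as soon as, for some $T$, the orbit returns to $T$ after $p$ steps with $\ell\le p\le 2^{|S|}$. If $T$ does not reappear within $2^{|S|}$ steps then it is not periodic, since a periodic point of a self-map on $2^{|S|}$ states has period at most $2^{|S|}$. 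Every state holds only polynomially many entities and all counters are polynomially bounded, so the procedure uses polynomial space; alternatively one may guess $T$ nondeterministically and invoke Savitch's theorem.

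For hardness I would reduce from the complement of the bounded-space acceptance problem --- ``given a deterministic Turing machine $M$, a string $x$ and a unary integer $m$, does $M$ \emph{fail} to accept $x$ without using more than $m$ tape cells?'' --- which is $\PSPACE$-complete because $\PSPACE$ is closed under complementation. Given $(M,x,m)$, with $M$ normalised (as in the surrounding constructions) so that it either accepts by reaching the configuration encoded by the fixed point $U=\set{f_1}$ or diverges, I would build $\rsM=(S,A)$ exactly as described before the statement and output the instance $(\rsM,\ell=2)$. This map is polynomial-time computable: as already argued, the number $K$ of primes --- and hence the whole of $S$ and $A$ --- is polynomial in $|M|$ and in the unary $m$. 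The correctness claim to establish is that $M$ accepts $x$ within space $m$ if and only if $\rsM$ has \emph{no} periodic point of period at least $2$; since ``there exists a periodic point of period at least $\ell$'' has $\ell=2$ as a special case, this yields the $\PSPACE$-hardness of the stated problem.

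The two directions of the equivalence go as follows. If $M$ accepts $x$ within space $m$, then by the design properties~\ref{tm:initial-point}--\ref{tm:final-point} of the construction \emph{every} state of $\rsM$ leads to the fixed point $U=\set{f_1}$; as $\res_\rsA$ is deterministic, the orbit of a periodic point is contained in a single cycle, and a cycle that reaches the fixed point $U$ can only be $\set{U}$, so $U$ is the unique periodic point and its period is $1<2$. Conversely, if $M$ does not accept $x$ within space $m$, I would follow the orbit of $C_0=T\cup Y_0$: the system simulates $M$ in parallel with the prime-based timer and never produces $f_1$, so within at most $L=\prod_{k=1}^{K}p_k$ steps the simulated head is sent to cell $m+1$ or the timer reaches $\set{c_{1,p_1-1},\dots,c_{K,p_K-1}}$; in either case $\restart$ is generated (via the head-overflow reactions $(\set{q_{m+1}},\set{f_1,\restart},\set{\restart})$ or via $(\set{c_{k,p_k-1}:1\le k\le K},\set{f_1,\restart},\set{\restart})$), and a step or two later $(\set{\restart},\set{f_1},C_0)$ returns the state to $C_0$. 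Hence $C_0$ lies on a cycle, and that cycle has length at least $2$ because $\res_\rsA(C_0)\ne C_0$ --- already the timer part advances from $Y_0$ to $\set{c_{1,1},\dots,c_{K,1}}$ --- so $C_0$ is a periodic point of period at least $2$.

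Since the heavy machinery already resides in the (previously completed) timer construction, the remaining care in the theorem itself is concentrated in two points, the second of which I expect to be the principal one. First, one must confirm that $\ell=2$ genuinely suffices and that the converse direction delivers period at least $2$ rather than mere periodicity; this rests on the small observation $\res_\rsA(C_0)\ne C_0$. Second, and more delicately, the forward direction depends on the reset mechanism forcing \emph{every} state --- the ``malformed'' ones and the valid states encoding $M$-configurations unreachable from the initial one included --- through $C_0$ and thence to $U$, so that no stray cycle of length $\ge 2$ survives when $M$ accepts; verifying this is precisely a matter of trusting properties~\ref{tm:second-point}--\ref{tm:final-point}, already checked when the construction was introduced. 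A minor bookkeeping detail is that the reset occupies two steps, so the cycle through $C_0$ has a well-defined finite length whose exact value is irrelevant beyond being at least $2$.
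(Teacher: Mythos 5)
Your proposal is correct and follows essentially the same route as the paper: the same exhaustive-search argument for $\PSPACE$ membership, and the same hardness reduction from the complement of space-bounded acceptance via the timer-equipped system $\rsM$, with acceptance forcing every state into the fixed point $\set{f_1}$ and non-acceptance putting $C_0$ on a reset cycle. The only divergence is the choice of target $\ell$: you output $\ell = 2$ and certify the cycle length by the observation $\res_\rsM(C_0) \ne C_0$, whereas the paper takes $\ell$ equal to the number of configurations of $M$ in space $m$ and appeals to the timer's period exceeding that count; your choice is in fact the more careful one, since $\ell$ must be written in unary and the number of configurations is exponential in $m$, so your $\ell = 2$ keeps the map $(M,x,m) \mapsto (\rsM,\ell)$ unambiguously polynomial-time.
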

\begin{proof}
  The problem is in $\PSPACE$ since there exists an algorithm that iterates over all the possible states of~$\rsA$ and checks if each of them leads back to itself after at least $\ell$ steps (but which is necessarily not greater than $2^{|S|})$. This can be performed in polynomial space.

  We prove that the problem under consideration is $\PSPACE$-hard by a reduction from the same problem used in the proof of Theorem~\ref{thm:reachability}. For any Turing machine $M$, any input string $x$, and any space bound $m$, consider the reaction system $\rsM$ from the above construction.
  Let $\ell > 1$ be the number of possible configurations for $M$ when working in space $m$.
  By construction, it holds that if $M$ accepts on input $x$, then each state of $\rsM$ 
  leads to the fixed point $\set{f_1}$, and, in particular, no cycle of length at least $\ell$ can be reached. Otherwise, if $M$ rejects on input $x$ by diverging, there is a cycle containing the state $C_0$, which includes the encoding of the initial configuration of $M$. The length of this cycle is dictated by the time needed for the timer to generate the entity $\restart$, which, by hypothesis, is greater than the number of configuration of $M$ when working in space $m$, and, therefore, of $\ell$. Hence, there is cycle of length at least $\ell$ if and only if $M$ does not accept on input $x$ working in space $m$. This concludes the reduction and the $\PSPACE$-hardness of the problem under consideration follows. \qed
\end{proof}

Also checking whether there exists a state that is far enough from a long cycle is $\PSPACE$-complete.

\begin{theorem}
\label{thm:reachability-large-cycle-exists}
Given a RS~$\rsA = (S,A)$ and two unary integers~$\ell \ge 1$ and~$k$ as input, it is~$\PSPACE$-complete to decide if there exists a state~$T \subseteq S$  with period at least~$\ell$ and preperiod at least~$k$.
\end{theorem}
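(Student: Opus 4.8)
The plan is to prove $\PSPACE$-membership by an exhaustive search over all states, combined with the standard polynomial-space technique for detecting cycles, and to prove $\PSPACE$-hardness by observing that the problem of Theorem~\ref{thm:has-large-fixed-cycle} is precisely the special case $k = 0$ of the present problem --- exactly as the problem of Theorem~\ref{thm:reachability-cycle-exists} was obtained from Theorem~\ref{thm:has-short-cycle} by fixing $k = 0$.

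For the upper bound, given $\rsA = (S,A)$, $\ell$, and $k$, one iterates over the $2^{|S|}$ states $T \subseteq S$; each state, together with binary step counters ranging up to $N := 2^{|S|}$, occupies only polynomial space. For a fixed $T$, the state $Z := \res_{\rsA}^{N}(T)$ is guaranteed to lie on the cycle eventually entered from $T$, since the preperiod of every state is below $N$; the period of $T$ is then the least $p \ge 1$ with $\res_{\rsA}^{p}(Z) = Z$, and its preperiod is the least $h \ge 0$ with $\res_{\rsA}^{h}(T) = \res_{\rsA}^{h+p}(T)$. Both can be obtained, or directly compared against $\ell$ and $k$ (preperiod $\ge k$ iff $\res_{\rsA}^{k-1}(T)$ is not a periodic point, and period $\ge \ell$ iff $\res_{\rsA}^{j}(Z) \ne Z$ for all $1 \le j \le \ell - 1$), by iterating $\res_{\rsA}$ while keeping only a constant number of states and counters, hence in polynomial space. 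The procedure accepts as soon as some $T$ passes both tests, and rejects otherwise, placing the problem in $\PSPACE$.

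For the lower bound, map an instance $(\rsA, \ell)$ of the problem of Theorem~\ref{thm:has-large-fixed-cycle} to the instance $(\rsA, \ell, 0)$; this is computable in logarithmic space. It remains to verify the equivalence: ``$\rsA$ has a periodic point of period at least $\ell$'' holds iff ``there exists $T \subseteq S$ with preperiod at least $0$ and period at least $\ell$''. The constraint ``preperiod at least $0$'' is vacuous, so the latter says that some state eventually enters a cycle of length at least $\ell$; if such a $T$ exists then $\res_{\rsA}^{h}(T)$, with $h$ the preperiod of $T$, is a periodic point of period at least $\ell$, and conversely any such periodic point is itself a witness $T$ with preperiod $0$. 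Thus the two problems agree on these instances and $\PSPACE$-hardness transfers from Theorem~\ref{thm:has-large-fixed-cycle}.

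There is no real obstacle in this argument; the only point deserving attention is that the preperiod and period of a state may be exponential in $|S|$, so the $\PSPACE$ algorithm must handle them via binary counters rather than by storing orbits --- the usual trick behind polynomial-space reachability. Should one prefer a reduction that exercises the parameter $k$ directly, one could alternatively take the timer-based reaction system of Section~\ref{sec:very-hard-problems} and splice a simple path of $k$ fresh entities feeding into the state $C_0$, so that the long cycle acquires a handle of length exactly $k$; but this refinement is not needed given the reduction above.
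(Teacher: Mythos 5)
Your proposal is correct and follows essentially the same route as the paper: the hardness is obtained exactly as in the paper, by observing that the $k=0$ case coincides with the problem of Theorem~\ref{thm:has-large-fixed-cycle}, and the membership is the standard brute-force-in-polynomial-space argument. The only (immaterial) difference is that you give a deterministic exhaustive search with binary counters, whereas the paper guesses the witness states nondeterministically and appeals to the closure of $\PSPACE$ under nondeterminism.
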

\begin{proof}
  The problem is in $\PSPACE$ since there exists a non-deterministic algorithm  able to guess two states $T$ and $U$ and check in polynomial space that $T$ leads to $U$ in at least $k$ steps, that $U$ is in a cycle of length at least $\ell$, and that $U$ is the first state of that cycle visited by $\rsA$ starting from $T$. Since $\PSPACE$ is closed under non-determinism, it contains the problem under consideration. 
 
  Such a problem is $\PSPACE$-hard since in the case $k=0$ it is equivalent to the one from Theorem~\ref{thm:has-large-fixed-cycle}, which is $\PSPACE$-complete. \qed
\end{proof}

The same problem remains $\PSPACE$-complete even when the starting state is given in input.

\begin{theorem}
  \label{thm:cycle-with-long-handle}
  Given a RS~$\rsA = (S,A)$, a state $T \subseteq S$ and two unary integers $\ell \ge 1$ and $k$ as input, it is $\PSPACE$-complete to decide if $T$ has preperiod at least $k$ and period at least $\ell$ (resp., at most $\ell$).
\end{theorem}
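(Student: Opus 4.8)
The plan is to establish membership in~$\PSPACE$ by a direct simulation, and~$\PSPACE$-hardness of both variants by reusing, with one small addition, the polynomial-space Turing machine simulation built just before Theorem~\ref{thm:has-large-fixed-cycle}. For the membership, notice that since there are only~$2^{|S|}$ states, the preperiod~$h$ and the period~$p$ of~$T$ are both at most~$2^{|S|}$; they can be computed in polynomial space by iterating~$\res_\rsA$ with polynomial-size binary counters and keeping only~$O(1)$ states in memory (for instance, put~$V=\res_\rsA^{2^{|S|}}(T)$, which already lies on the cycle, let~$p$ be the least positive integer with~$\res_\rsA^{p}(V)=V$, and let~$h$ be the least integer with~$\res_\rsA^{h+p}(T)=\res_\rsA^{h}(T)$). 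One then accepts iff~$h\ge k$ and~$p\ge\ell$ (resp.\ $p\le\ell$).

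For the hardness I would reduce from the~$\PSPACE$-complete problem used in the proof of Theorem~\ref{thm:reachability}, that is, ``does~$M$ accept~$x$ without using more than~$m$ tape cells?'', where, as usual for space-bounded machines,~$M$ never moves past cell~$m$, so that a non-accepting computation loops inside the first~$m$ cells. Given such an instance~$(M,x,m)$ I would build the reaction system~$\rsM$ from the construction preceding Theorem~\ref{thm:has-large-fixed-cycle}, and denote by~$C_0$ its state encoding the initial configuration of~$M$ together with the zeroed timer. By that construction, if~$M$ accepts~$x$ within space~$m$ then~$C_0$ reaches the fixed point~$\set{f_1}$, so~$C_0$ has period~$1$; and if~$M$ does not accept~$x$ within space~$m$ then~$C_0$ lies on a cycle whose length exceeds the number of configurations of~$M$ working in space~$m$, so~$C_0$ is a periodic point of period at least~$2$ and preperiod~$0$.

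To also pin down the preperiod, I would extend~$\rsM$ by~$k$ fresh entities~$d_1,\dots,d_k$, add the set~$\set{d_1,\dots,d_k}$ to the inhibitor set of every reaction already present, and add the reactions~$(\set{d_i},\varnothing,\set{d_{i-1}})$ for~$2\le i\le k$ together with~$(\set{d_1},\varnothing,C_0)$, where~$C_0$ on the right-hand side stands for the set of its entities. In the resulting system the dynamics from~$T:=\set{d_k}$ (or~$T:=C_0$ when~$k=0$) visits~$\set{d_k}\to\dots\to\set{d_1}\to C_0\to\dots$, since whenever some~$d_i$ is present every original reaction is disabled and only the matching handle reaction fires, and no~$d_i$ is regenerated once~$C_0$ is reached; hence~$T$ has the same period as~$C_0$ and a preperiod equal to~$k$ plus the preperiod of~$C_0$, in particular at least~$k$. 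Consequently, taking~$\ell=2$, the state~$T$ has preperiod at least~$k$ and period at least~$\ell$ if and only if~$M$ does \emph{not} accept~$x$ within space~$m$, which is~$\PSPACE$-complete (being the complement of a~$\PSPACE$-complete problem); and taking~$\ell=1$, the state~$T$ has preperiod at least~$k$ and period at most~$\ell$ if and only if~$M$ \emph{does} accept~$x$ within space~$m$, which is~$\PSPACE$-complete. The base construction is polynomial time and, for any chosen~$k\ge 0$ output in unary, the handle adds only polynomially much, so both reductions~$(M,x,m)\mapsto(\rsM,T,\ell,k)$ are polynomial time; hence both variants are~$\PSPACE$-hard and therefore~$\PSPACE$-complete.

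The step I expect to be the main obstacle is the interaction of the handle with the simulation. Without the extra inhibitors a bare handle state~$\set{d_i}$ would fail to encode a valid configuration--timer pair, so the ``reset'' and ``malformed configuration'' reactions of the construction would fire, produce~$\restart$, and send the system to~$C_0$ in a single step, destroying the intended length-$k$ preperiod; inserting all the~$d_i$ into the inhibitor sets of the original reactions is precisely what isolates the handle from the rest of the dynamics. The other ingredient, that a non-accepting computation yields a cycle through~$C_0$ made of superpolynomially many states, is already supplied by the construction preceding Theorem~\ref{thm:has-large-fixed-cycle}, so it can be invoked directly. \qed
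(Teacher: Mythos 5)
Your proof is correct and rests on the same core reduction as the paper's: both use the timer-augmented simulation $\rsM$ built just before Theorem~\ref{thm:has-large-fixed-cycle} and the dichotomy that $C_0$ reaches the fixed point $\set{f_1}$ (period $1$, positive preperiod) when $M$ accepts and otherwise sits on a cycle longer than the number of configurations (preperiod $0$). The only divergence is in how the parameters are dispatched: the paper simply sets $T=C_0$, $k=1$, $\ell=1$, so that ``period at least $\ell$'' is vacuous and both variants collapse to ``does $C_0$ have preperiod at least $1$'', i.e.\ ``does $M$ accept''; you instead graft a length-$k$ handle $\set{d_k}\to\cdots\to\set{d_1}\to C_0$ to force an arbitrary preperiod and take $\ell=2$ for the ``at least'' variant via the complement. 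Your handle gadget is sound (the added inhibitors do isolate it, and no $d_i$ is ever regenerated), but it is not needed for hardness, since the reduction is free to choose $k$ and $\ell$ in the target instance; it would only matter if one wanted hardness for every fixed value of $k$.
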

\begin{proof}
  The problem is in $\PSPACE$ since it is possible to check in polynomial space if each state in the dynamics of $\rsA$ starting from $T$ is part of a cycle or not, thus obtaining the preperiod and period of $T$.

  The $\PSPACE$-hardness follows by a reduction from the same problem used in the proof of Theorem~\ref{thm:reachability}. For any Turing machine $M$, input string $x$, and unary space bound $m$, let $\rsM$ be the reaction system obtained by the construction presented in this section.  Take $T = C_0$, $k=1$, and $\ell=1$. It holds that $T$ has preperiod at least one if and only if $M$ accepts on input $x$. In that case, the dynamics reaches, after at least one step, the fixed point $\set{f_1}$, and thus $T$ has period $1$, otherwise $T$ is part of a cycle and thus has preperiod $0$. Then, the reduction is accomplished.\qed
\end{proof}

All problems related to the existence of global attractors also turn out to be $\PSPACE$-complete.

\begin{theorem}
\label{thm:small-global}
Given a RS~$\rsA = (S,A)$ and a unary integer~$\ell \ge 1$ as input, it is~$\PSPACE$-complete to decide:
\begin{enumerate}
  \item If a state~$T \subseteq S$, also given as input, belongs to a global attractor cycle of length at most~$\ell$ (resp., at least~$\ell$).
  \item If there exists a global attractor cycle of length at most~$\ell$ (resp., at least~$\ell$).
\end{enumerate}
\end{theorem}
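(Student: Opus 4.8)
The plan is to reduce the analysis of all four problems to one structural observation and then pair a routine polynomial-space procedure with the reaction system $\rsM$ already built in this section. The observation: since every invariant set of a \RS is a union of cycles, a \RS has a \emph{global attractor cycle} if and only if it has exactly one cycle at all, in which case that cycle is automatically a global attractor and its length is its period. Hence each of the four properties can be decided in polynomial space by: iterating $\res_\rsA$ from $\varnothing$ for at most $2^{|S|}$ steps to locate the cycle $C$ it enters; iterating over all states $U \subseteq S$ and checking that each reaches $C$ within $2^{|S|}$ steps, rejecting as soon as some $U$ enters a cycle disjoint from $C$; and finally checking that $|C|$ meets the stated bound ($\le \ell$ or $\ge \ell$) and, for the first pair of problems, that $T \in C$. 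Every intermediate object has size polynomial in $|S|$ and in $\ell$ (given in unary), so all four problems lie in $\PSPACE$.

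For $\PSPACE$-hardness I would reduce from the same bounded-space acceptance problem used in Theorem~\ref{thm:reachability} --- does a Turing machine $M$ accept $x$ without ever leaving the first $m$ tape cells? --- reusing the reaction system $\rsM$ constructed in the present section, whose construction is polynomial-time. That construction delivers exactly the dichotomy one needs. If $M$ accepts, then $\set{f_1}$ is a fixed point and \emph{every} state of $\rsM$ reaches it: states not lying on an accepting trajectory (malformed states, and states coding computations that do not halt or that run off the tape) are funnelled via the entity $\restart$ back to $C_0$, from which $M$'s accepting computation leads to $\set{f_1}$; thus $\set{f_1}$ is the unique cycle of $\rsM$, i.e., a global fixed-point attractor and in particular a global attractor cycle of length $1$. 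If $M$ does not accept, then after enough steps the timer emits $\restart$ and $C_0$ lies on a cycle --- of length exceeding the number of configurations of $M$ --- that is disjoint from the fixed point $\set{f_1}$, so $\rsM$ has at least two cycles and hence \emph{no} global attractor cycle at all.

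With this dichotomy the reduction $(M,x,m) \mapsto (\rsM, T, \ell)$ with $T = \set{f_1}$ and $\ell = 1$ (the component $T$ being dropped for the existence versions) settles all four variants simultaneously: ``$T$ belongs to a global attractor cycle of length at most $1$'', ``$T$ belongs to one of length at least $1$'', ``there exists a global attractor cycle of length at most $1$'', and ``there exists one of length at least $1$'' each hold if and only if $\rsM$ has a unique cycle, if and only if $M$ accepts $x$ within space $m$. Since the mapping is polynomial-time computable, all four problems are $\PSPACE$-hard.

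The only subtle point is that this bare construction exhibits only a \emph{length-one} global attractor, so the reduction above certifies the ``at least $\ell$'' variants for $\ell = 1$ only; to make $\ell$ genuinely matter there I would replace the reaction $(\set{f_1}, \varnothing, \set{f_1})$ by a gadget on $\ell + 1$ fresh entities that drives every state containing the accepting entity $f_1$ into a single fixed cycle of length $\ell + 1$ whose states all contain $f_1$. An accepting computation then enters this long cycle, which is the unique cycle --- hence a global attractor cycle of length $\ell + 1 \ge \ell$ --- exactly when $M$ accepts (taking for $T$ the canonical state of that cycle); if $M$ rejects, the long cycle coexists with the reset cycle through $C_0$ and again no global attractor cycle exists. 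Verifying that the gadget absorbs \emph{all} $f_1$-carrying states, malformed ones included, without creating spurious cycles or fixed points is the main piece of bookkeeping --- and the only real obstacle; everything else follows at once from the accept/reject dichotomy already used for Theorems~\ref{thm:reachability} and~\ref{thm:has-large-fixed-cycle}.
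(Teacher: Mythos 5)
Your proposal is correct and follows essentially the same route as the paper: membership by exhaustive polynomial-space iteration over states, and hardness via the reaction system $\rsM$ of this section with $T=\set{f_1}$ and $\ell=1$, using the dichotomy that $\set{f_1}$ is the unique cycle (hence a global attractor cycle) exactly when $M$ accepts within space $m$. The only divergence is your closing ``subtle point'': since $\ell$ is part of the input, hardness of the ``at least~$\ell$'' variants already follows from the instances with $\ell=1$ (a cycle of length at least~$1$ is just a cycle), so the extra gadget forcing a cycle of length $\ell+1$ through $f_1$ --- and the bookkeeping you flag as the main obstacle --- is unnecessary; the paper stops at $\ell=1$ for all four variants.
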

\begin{proof}
  All these problems are in $\PSPACE$, since it is possible to check in polynomial space if any given state $T$ of $\rsA$ belongs to a cycle of length at most~$\ell$ (resp., at least~$\ell$) and it is reachable from every other state. The existential version of the problems can be solved by iterating across all the states of $\rsA$, which can also be performed in polynomial space.

  The $\PSPACE$-hardness again follows from the reduction from the same problem used in the proof of Theorem~\ref{thm:reachability}. For any Turing Machine $M$, input string $x$, and unary space bound $m$, according to the construction presented in this section, the reaction system $\rsM$ has a global attractor cycle, namely the fixed point $\set{f_1}$, if and only if $M$ accept on input $x$ within space $m$. The reduction is completed by setting $\ell = 1$ and, for the first problem, $T=\set{f_1}$.
  \qed
\end{proof}

\section{Conclusions}\label{sec:conclusion}

This paper has tackled the study of the computational complexity of problems associated with the dynamics of RS. 
Table~\ref{tab:comprehensive} provides a reference list of such problems.

We stress that the interest of such studies is twofold. From one hand, results about RS are lower bounds for similar results of other finite discrete dynamics systems (Boolean automata networks, for example). From the other hand, when suitably constrained, they are a simple model of parsimonious systems (\ie systems which have a description exponentially shorter than the complete description of their dynamics). A natural research direction consists in trying to understand what exactly causes the complexity jumps between similar problems (cf. the last two entries of Table~\ref{tab:comprehensive}, for example).

A more perspective research direction consists in formalizing a general theory of the computational complexity of the dynamics of parsimonious discrete dynamical systems.
\begin{table}[t]
\begin{center}
\footnotesize
\begin{tabular}{*{2}{|l}|}
\textbf{Problem dealt with in this paper}&
\multicolumn{1}{l}{\textbf{Complexity}}
\\
\noalign{\global\dimen1\arrayrulewidth
\global\arrayrulewidth2pt}
\hline
\noalign{\global\arrayrulewidth\dimen1}
\strut
$\exists$ periodic point of period at most $\ell$ (*) &\NP-complete\\
\hline
$T$ has period $\leq\ell$ \& $\exists U$ reaching $T$ in  $\geq k$ steps (*)&\NP-complete\\
\hline
$\exists$ periodic point with period $\leq\ell$ \& preperiod len.  $\geq k$ (*)&\NP-complete\\
\hline
$T$ is global attractor of period $\leq\ell$ ($\geq\ell$) (*)&\PSPACE-complete\\
\hline
Reachability (*) &\PSPACE-complete\\
\hline
$\exists$ global fixed point attractor of period $\leq\ell$ ($\geq\ell$) (*)&\PSPACE-complete\\
\hline
$T$ belongs to a cycle of period (*) $\geq\ell$&\PSPACE-complete\\
\hline
$\res$ is bijective (=)&\coNP-complete\\
\hline
$\exists$ a periodic point with period $\leq\ell$ \& preperiod len.  $\geq k$ (*)&\NP-complete\\
\hline
Reachability in $k$ steps &\P-complete\\
\hline
$T$ has period $\leq\ell$ &\P-complete\\
\hline
$T$ has period $\leq\ell$ \& preperiod $\geq k$&\P-complete\\
\hline
$T$ has a k-predecessor&\P-complete\\
\hline
$\res$ is the identity&\coNP-complete\\
\hline
$T$ is an attractor cycle of period $\leq\ell$ \& diameter $\leq d$&\coNP-complete\\
\hline
$T$ is an attractor cycle of period $\leq\ell$ \& diameter $\geq d$&\NP-complete\\
\hline
$\exists$ an attractor cycle of period $\leq\ell$ \& diameter $\geq d$&\NP-complete\\
\hline
$\exists$ an attractor cycle of period $\leq\ell$ \& diameter $\leq d$&$\SigmaTwoP$-complete\\
\noalign{\global\dimen1\arrayrulewidth
\global\arrayrulewidth1pt}
\hline
\noalign{\global\arrayrulewidth\dimen1}
\end{tabular}
\end{center}
\caption{Comprehensive list of problems on the dynamics of RA and their complexity. (*)/(=): extended/same version of a result presented at either CIE 2014~\cite{Formenti2014a} or DCFS 2014~\cite{Formenti2014b}.}
\label{tab:comprehensive}
\end{table}
\section*{Acknowledgements}

Enrico Formenti acknowledges the partial support from the project PACA APEX FRI.
Alberto Dennunzio and Luca Manzoni were partially supported by Fondo d'Ateneo (FA)  2016 of Università degli Studi di Milano Bicocca: ``Sistemi Complessi e Incerti: teoria ed applicazioni''.
Antonio E. Porreca was partially supported by Fondo d'Ateneo (FA) 2015 of Università degli Studi di Milano Bicocca: ``Complessità computazionale e applicazioni crittografiche di modelli di calcolo bioispirati''.

\bibliographystyle{model1b-num-names}
\bibliography{Bibliography}

\end{document}